%
\documentclass[runningheads]{llncs}
\usepackage[T1]{fontenc}
%
\usepackage{graphicx}
\usepackage[inline]{enumitem}
\usepackage{mathtools}
\usepackage{amsmath}
\usepackage{multirow}
\usepackage{amssymb}
\usepackage{hyperref} 
\usepackage{booktabs}
\usepackage{xcolor}

\def \PMp{\mathcal{PM}_{\vec v}}
\def \PP{\mathcal{P}}
\def \EE{\mathbb{E}}
\def \erre{\mathbb{R}}
\def \enne{\mathbb{N}}

\def \SCo{SC}

\DeclarePairedDelimiter\floor{\lfloor}{\rfloor}
%
%
\begin{document}
\title{The \texorpdfstring{$k$}{k}-Facility Location Problem Via Optimal Transport: A Bayesian Study of the Percentile Mechanisms}
\titlerunning{The \texorpdfstring{$k$}{k}-Facility Location Problem Via Optimal Transport}
%
\author{Gennaro Auricchio\inst{1}\orcidID{0000-0002-4285-8887} \and
Jie Zhang \inst{2}}
\authorrunning{G. Auricchio and J. Zhang}
%
\institute{Università degli Studi di Padova, Padua, Italy  \and
University of Bath, Bath, United Kingdom\\ \email{gennaro.auricchio@unipd.it,jz2558@bath.ac.uk}}
%
%
\maketitle              
\begin{abstract}
In this paper, we investigate the $k$-Facility Location Problem ($k$-FLP) within the Bayesian Mechanism Design framework, in which agents' preferences are samples of a probability distributed on a line. 
Our primary contribution is characterising the asymptotic behavior of percentile mechanisms, which varies according to the distribution governing the agents' types.
To achieve this, we connect the $k$-FLP and projection problems in the Wasserstein space. 
Owing to this relation, we show that the ratio between the expected cost of a percentile mechanism and the expected optimal cost is asymptotically bounded. 
Furthermore, we characterize the limit of this ratio and analyze its convergence speed. 
Our asymptotic study is complemented by deriving an upper bound on the Bayesian approximation ratio, applicable when the number of agents $n$ exceeds the number of facilities $k$.
We also characterize the optimal percentile mechanism for a given agent's distribution through a system of $k$ equations.
Finally, we estimate the optimality loss incurred when the optimal percentile mechanism is derived using an approximation of the agents' distribution rather than the actual distribution.

\keywords{Bayesian Mechanism Design  \and Facility Location Problem \and Optimal Transport.}
\end{abstract}

\section{Introduction}
The scope of Mechanism Design is defining procedures that aggregate a group of agents' private information for optimizing a social objective.
Nevertheless, merely optimizing the social objective based on the reported preferences often leads to undesired manipulation due to the agents' self-interested behaviour.
For this reason, one of the most important properties a mechanism should possess is \textit{truthfulness}, which guarantees that no agent benefits from misreporting its private information. 
This stringent property is often incompatible with the optimization of the social objective, so we have to compromise on a sub-optimal solution.
To quantify the efficiency loss, Nisan and Ronen introduced the notion of \textit{approximation ratio}, which is the highest ratio between the social objective achieved by a truthful mechanism and the optimal social objective achievable over all the possible agents' reports \cite{nisan1999algorithmic}. 
One of the most famous examples of these problems is the $k$-Facility Location Problem ($k$-FLP), where a central authority has to locate $k$ facilities amongst $n$ self-interested agents.
Every agent needs to access the facility, so they would prefer to have one of the facilities placed as close as possible to their position.
Despite its simplicity, this problem and its variants have found a wide range of applications in fields such as disaster relief \cite{doi:10.1080/13675560701561789}, supply chain management \cite{MELO2009401}, healthcare \cite{ahmadi2017survey}, and public facilities accessibility \cite{barda1990multicriteria}.
The study of the $k$-FLP from an algorithmic mechanism design viewpoint was initiated by Procaccia and Tennenholtz.
In their seminal work \cite{procaccia2013approximate}, they considered the problem of locating one facility amongst a group of agents situated in a line.
They were the first to design an allocation process that places the facility while keeping in mind that every agent is self-interested, i.e. that agents would manipulate the process in its favour if able. 
Subsequently, a variety of methods with fixed approximation ratios for positioning one or two facilities on different types of structures such as double-peaked \cite{DBLP:journals/aamas/Filos-RatsikasL17}, trees \cite{DBLP:conf/atal/FilimonovM21}, circles \cite{DBLP:conf/sigecom/LuSWZ10,DBLP:conf/wine/LuWZ09}, graphs \cite{10.2307/40800845,DBLP:conf/sigecom/DokowFMN12}, and metric spaces \cite{DBLP:conf/sagt/Meir19,DBLP:conf/sigecom/TangYZ20} were introduced.
These positive outcomes, however, pertain to scenarios with a limited number of agents or when the facilities to place are at most $2$.
The approximation ratio results are much more negative when we move to three or more facilities.
Fotakis and Tzamos \cite{fotakis2014power} showed that for every $k \ge 3$, there does not exist any deterministic, anonymous, and truthful mechanisms with a bounded approximation ratio for the $k$-FLP on the line, even for instances with $k+2$ agents.
Nonetheless, it is possible to define truthful mechanisms with bounded approximation ratio when the number of agents is equal to the number of facilities plus one \cite{escoffier2011strategy} or by considering randomized mechanisms \cite{fotakis2013strategyproof}.
Our study concerns a class of truthful mechanisms for the generic $k$-FLP, the \textit{percentile mechanisms}, introduced in \cite{sui2013analysis}.
Although every percentile mechanism has an unbounded approximation ratio, we prove that this is not the case if the agents' type is sampled from a probability distribution.
This framework is also known as Bayesian Mechanism Design \cite{chawla2014bayesian,hartline2013bayesian}.
Our main contribution shows that it is possible to select a percentile mechanism that asymptotically behaves optimally, i.e. it minimizes the expected social objective.

\subsection{Our Contribution}
In this paper, we conduct a comprehensive investigation of the $k$-Facility Location Problem ($k$-FLP) from a Bayesian Mechanism Design perspective, where we assume that agents' positions on the line follow a distribution $\mu$ \cite{chawla2014bayesian,hartline2013bayesian}. 
We focus specifically on the class of percentile mechanisms \cite{sui2013analysis} and explore the conditions under which the Bayesian approximation ratio of these mechanisms -- defined as the ratio between the expected cost induced by a mechanism and the expected optimal cost -- is bounded.
We establish that each percentile mechanism exhibits different performances depending on the measure $\mu$, and we identify the optimal percentile mechanism tailored to a distribution $\mu$.
Our study establishes a connection between the $k$-FLP and a projection problem in the Wasserstein space. 
Through this connection, we import tools and techniques from Optimal Transport theory to approach the $k$-FLP.
In particular, we demonstrate that when the number of agents on the line tends to infinity, the ratio between the expected cost induced by the mechanism and the expected optimal cost converges to a bounded value.
Moreover, we characterize both the limit value of the ratio and the speed of convergence.
To retrieve these results, we make massive use of Bahadur's representation formula, which relates the $j$-th ordered statistic of a random variable to a suitable quantile of the probability distribution associated with the random variable
Finally, leveraging the characterization of the limit and its convergence rate, we derive a bound on the performances of percentile mechanisms for any finite number of agents.
We then tackle the problem of retrieving the best percentile mechanism tailored to a distribution $\mu$ and the number of facilities $k$.
We show that there always exists a percentile vector, namely $\vec v_\mu\in(0,1)^k$, that induces the optimal percentile mechanism, \textit{i.e.} a mechanism whose expected social cost is asymptotic to the optimal expected cost when the number of agents increases.
We characterize this vector as the solution to a system of $k$ equations and employ it to compute the optimal percentile vector associated with common probability measures, such as the Uniform and Gaussian distributions.
Lastly, we show that the optimal percentile vector is invariant under positive affine transformations of the probability measures describing agents.
In particular, $\vec v_\mu$ does not depend on the specific mean and variance of the distribution $\mu$.
To conclude the paper, we present a study on the stability of the optimal percentile vector. 
Specifically, let $\tilde{\mu}$ be an approximation of the true agents' distribution $\mu$. 
Additionally, let $\vec{v}_{\tilde{\mu}}$ and $\vec{v}_{\mu}$ represent the optimal percentile vectors associated with $\tilde{\mu}$ and $\mu$, respectively.
We demonstrate that when the agents are distributed according to $\mu$, the Bayesian approximation ratio limit of the percentile mechanism induced by $\vec{v}_{\tilde{\mu}}$ deviates from $1$ by an amount proportional to the infinity Wasserstein distance between $\mu$ and $\tilde{\mu}$.
The more precise the approximation of $\mu$, the better the asymptotic performance of the optimal percentile mechanism induced by $\vec v_{\mu}$ when the agents are distributed according to $\mu$.
%

%
%
%

\subsection{Related Work}

The study of $k$-FLP research from an algorithmic mechanism design viewpoint was initiated by Procaccia and Tennenholtz in \cite{procaccia2013approximate}.
When $k=1,2$ there are several truthful mechanisms, such as the median mechanism \cite{black1948rationale} and its generalizations \cite{barbera2001strategy}, that achieve small constant worst-case approximation ratios.
When $k> 2$, however, these efficiency guarantees are much more negative: there are no truthful, deterministic and anonymous mechanisms with bounded approximation ratio \cite{fotakis2014power}.
It is worthy of notice however, that this impossibility result does not apply to randomized mechanisms \cite{fotakis2013strategyproof}, to instances where the number of agents is precisely equal to the number of facilities plus one, as shown in \cite{escoffier2011strategy}, and to problems in which facilities have capacity limits \cite{auricchio2024facility,auricchiocapacitated}.
The \textit{Percentile Mechanisms} are a class of mechanisms for the $k$-FLP that, similarly to the median mechanism, places the facilities at the locations of $k$ agents depending on their order on the line, \cite{sui2013analysis}.
Due to the dictatorial-like nature of these mechanisms, it is easy to build an \textit{ad hoc} instance of the $k$-FLP on which the optimal social cost is as small as we like, but the cost attained by the mechanisms is greater than a positive constant.
These instances depend on the percentile mechanism and carry little practical sense in applied contexts. 
Bayesian Mechanism Design is an alternative paradigm to evaluate the performances of a mechanism in which every agent's type is drawn from a known probability distribution, \cite{hartline2013bayesian,chawla2014bayesian}.
Consequentially, this defines a distribution over the set of inputs over which the mechanism is defined, allowing us to introduce the notion of expected cost of a mechanism.
To the best of our knowledge, the only other two papers studying the $k$-FLP in a Bayesian Mechanism Design framework, are \cite{auricchio2023extended}, where the $k$-Capacitated Facility Location Problem is considered, and \cite{zampetakis2023bayesian}, in which the authors study how to use the Lugosi-Mendelson median \cite{10.1214/17-AOS1639} to define approximately truthful mechanisms for the $1$-FLP. 
Other fields in which Bayesian Mechanism Design framework has been applied are: routing games \cite{gairing2005selfish}, combinatorial mechanisms based on $\epsilon$-greedy mechanisms \cite{lucier2010price}, and auction mechanism design problems \cite{chawla2007algorithmic,hartline2009simple}.

Over the past few decades, Optimal Transport (OT) methods have gradually found their application within the broad landscape of Theoretical Computer Science.
Notable examples include Computer Vision \cite{Rubner2000,auricchio2020equivalence,Pele2009}, Computational Statistics \cite{Levina2001}, Clustering \cite{auricchio2019computing}, and Machine Learning in general \cite{scagliotti2023normalizing,frogner2015learning,scagliotti23}.
%
%
However, there has been limited advancement in applying OT theory to the field of mechanism design.
To the best of our knowledge, the only field related to mechanism design that has been explored using OT theory is auction design \cite{daskalakis2013mechanism}.
In their work, the authors demonstrated that the optimal auction mechanism for independently distributed items can be characterized by the Dual Formulation of an OT problem.
Moreover, they utilized this relationship to derive the optimal mechanism for various item classes, thereby establishing a fruitful application of OT theory in the context of mechanism design.

\section{Preliminaries}
\label{sec:preliminaries}

In this section, we fix the necessary notations on the $k$-Facility Location Problem ($k$-FLP), Bayesian Mechanism Design, and Optimal Transport (OT).
Furthermore, we recall the definition of the percentile mechanisms. 

\paragraph{The \texorpdfstring{$k$}{k}-Facility Location Problem.}

Given a set of self-interested agents $\mathcal{N}=[n]:=\{1,2,\dots,n\}$, we denote with $\mathcal{X}:=\{x_i\}_{i\in [n]}$ the set of their positions over $\erre$. Without loss of generality, assume that the agents are indexed such that the positions $x_i$'s are in non-decreasing order. 
We denote with $\vec x :=(x_1,x_2,\dots,x_n)\in \erre^n$ the vector containing the elements of $\mathcal{X}$. 
In this setting, if the $k$ facilities are located at the entries of the vector $\vec y :=(y_1,y_2,\dots,y_k)\in \erre^k$, an agent positioned in $x_i$ incurs a cost of $c_i( x_i,\vec y)=\min_{j\in [k]}|x_i-y_j|$ to access a facility.
In what follows, we will use $\vec y=(y_1,\dots,y_k)$ and the set of points $\{y_j\}_{j\in [k]}$, interchangeably. 
%
%
%
Given a vector $\vec x\in\erre^n$ containing the agents' positions, the \textit{Social Cost} ($\SCo$) of $\vec y$ is the sum of all the agents' utilities, that is $\SCo(\vec x, \vec y)=\sum_{i\in [n]}c_i(x_i,\vec y )$.
The $k$-\textit{Facility Location Problem}, consists in finding the locations for $k$ facilities that minimize the function $\vec y \to \SCo(\vec x,\vec y)$.
Given that multiplying the cost function by a constant does not alter the approximation ratio of the mechanisms, we rescaled the Social Cost as $\SCo(\vec x, \vec y)=\frac{1}{n}\sum_{i\in [n]}c_i(x_i,\vec y )$.

\paragraph{Mechanism Design and the Worst-Case analysis.}
A \textit{$k$-facility location mechanism} is a function $f:\erre^n\to \erre^k$ that takes the agents' reports $\vec x$ in input and returns a set of $k$ locations $\vec y$ for the facilities. 
In general, an agent may misreport its position if it this results in a set of facility locations such that the agent's incurred cost is smaller than reporting truthfully.
A mechanism $f$ is said to be \textit{truthful} (or \textit{strategyproof}) if, for every agent, its cost is minimized when it reports its true position. That is, $c_i(x_i,f(\vec x))\le c_i(x_i,f(\vec x_{-i},x_i'))$ for any misreport $x_i'\in \erre$, where $\vec x_{-i}$ is the vector $\vec x$ without its $i$-th component.
Albeit deploying a truthful mechanism instead of computing the optimal location prevents agents from misreporting their positions, it comes with a loss in terms of efficiency.
To evaluate this efficiency loss, Nisan and Ronen introduced the notion of approximation ratio of a truthful mechanism \cite{nisan1999algorithmic}.
Given a truthful mechanism $f$, its approximation ratio is defined as 
\begin{equation}
    \label{eq:worstcaseapproximationratio}
    ar(f):=\sup_{\vec x \in \mathbb{R}^{n}}\frac{SC_{f}(\vec x)}{SC_{opt}(\vec x)},
\end{equation}
where $SC_f(\vec x)$ is the Social Cost of placing the facilities at $f(\vec x)$ and $SC_{opt}(\vec x)$ is the optimal Social Cost achievable when the agents' report is $\vec x$.
In what follows, we will refer to the worst-case approximation ratio defined in \eqref{eq:worstcaseapproximationratio} as the approximation ratio.
Evaluating a mechanism $f$ from its approximation ratio is also known as the worst-case analysis of $f$.

\paragraph{Bayesian Analysis.}
In Bayesian Mechanism Design, we assume that the agents' types follow a probability distribution and study the performance of mechanisms from a probabilistic viewpoint.
Every agent's type is then described by a random variable $X_i$.
In what follows, we assume that every $X_i$ is identically distributed according to a law $\mu$ and independent from the other random variables.
A mechanism is said to be truthful if, for every agent $i$, it holds
\begin{equation}
    \EE_{\vec X_{-i}}[c_i(x_i,f(x_i,\vec X_{-i}))]\le \EE_{\vec X_{-i}}[c_i(x_i,f(x_i',\vec X_{-i}))] \quad\quad\quad \forall x_i\in\erre,
\end{equation}
where $x_i$ agent $i$'s true type, $\vec X_{-i}$ is the $(n-1)$-dimensional random vector that describes the other agents' type, and $\EE_{\vec X_{-i}}$ is the expectation with respect to the joint distribution of $\vec X_{-i}$.
Given $\beta\in\erre$, a mechanism $f$ is a $\beta$-approximation if $\EE[SC_f(\vec X_n)]\le \beta\,\EE[SC_{opt}(\vec X_n)]$ holds, so that the lower $\beta$ is, the better the mechanism is.
To unify the notation, we define the Bayesian approximation ratio for the Social Cost as the ratio between the expected Social Cost of a mechanism and the expected Social Cost of the optimal solution.
More formally, given a mechanism $f$, its Bayesian approximation ratio is defined as follows 
\begin{equation}
\label{eq:B_ar}    
    B_{ar}^{(n)}(f):=\frac{\EE[SC_f(\vec X_n)]}{\EE[SC_{opt}(\vec X_n)]},
\end{equation}
where the expected value is taken over the joint distribution of the vector $\vec X_n := (X_1,\dots,X_n)$.
Notice that, if $B_{ar}^{(n)}(f)<+\infty$, then $f$ is a $B_{ar}^{(n)}(f)$-approximation.
Since we consider only truthful mechanisms, in what follows we use $\vec x$ to denote the vector containing the agents' reports and the agents' real position interchangeably.
Moreover, we use the capital letter $\vec X_n$ to denote the random vector describing the agents' types.

\paragraph{The Percentile Mechanisms.}

The class of \textit{percentile mechanisms} has been introduced in \cite{sui2013analysis}. 
Given a vector $\vec v=(v_1,v_2,\dots,v_k)$, such that $0\le v_1\le v_2\le \dots\le v_k\le 1$, the percentile mechanism induced by $\vec v$, namely $\PMp$, proceeds as follows:
\begin{enumerate*}[label=(\roman*)]
    \item The mechanism designer collects all the reports of the agents, namely $\{x_1,\dots,x_n\}$ and reorders them  non-decreasingly.
    Without loss of generality, let us assume that the reports are already ordered in non-decreasing order, i.e. $x_i\le x_{i+1}$.
    \item The designer places the $k$ facilities at the positions $y_j=x_{i_j}$, where $i_j=\floor{(n-1)v_j}+1$.
\end{enumerate*}
Notice that, if the values $x_i$ are sampled from a distribution, the output of any percentile mechanism is composed by the $(\floor{(n-1)v_j}+1)$-th order statistics of the sample.
Percentile mechanisms are truthful whenever the cost of an agent placed at $x_i$ is $c_i=\min_{j\in [k]}|x_i-y_j|$, where $y_j$ are the position of the facilities. 
Thus, when $k>2$, the approximation ratio of any percentile mechanism becomes unbounded since the percentile mechanisms are also anonymous and deterministic, that is $ar(\PMp) = +\infty$ for every percentile vector $\vec v$.
Moreover, it is worth noting that since percentile mechanisms are truthful in the classic setting, they also retain their truthfulness within the Bayesian framework \cite{hartline2013bayesian}.
%
%

\paragraph{Basic Notions on Optimal Transport.}
In the following, we denote with $\PP(\erre)$ the set of probability measures over $\erre$. 
Given a measure $\gamma\in\PP(\erre)$, we denote with $spt(\gamma)\subset \erre$ the support of $\gamma$, that is, the smallest closed set $C\subset \erre$ such that $\gamma(C)=1$.
Furthermore, we denote with $\PP_k(\erre)$ the set of probability measures over $\erre$ whose support consists of $k$ points. 
That is, $\nu\in\PP_k(\erre)$ if and only if $\nu=\sum_{j=1}^k \nu_j\delta_{x_j}$, where $x_j\in \erre$ for every $j\in [k]$, $\nu_j\ge 0$ are real values such that $\sum_{j=1}^k\nu_j=1$, and $\delta_{x_j}$ is the Dirac's delta centered in $x_j$.
Given two measures $\alpha,\beta\in\PP(\erre)$, the Wasserstein distance between $\alpha$ and $\beta$ is defined as 
\begin{equation}
    \label{eq:wass_p}
    W_1(\alpha,\beta)=\min_{\pi\in\Pi(\alpha,\beta)}\int_{\erre\times \erre}|x-y|d\pi,
\end{equation}
where $\Pi(\alpha,\beta)$ is the set of probability measures over $\erre\times \erre$ whose first marginal is equal to $\alpha$ and the second marginal is equal to $\beta$, \cite{kantorovich2006translocation}.
Lastly, the infinity Wasserstein distance is defined  as $W_\infty(\alpha,\beta)=\min_{\pi\in\Pi(\alpha,\beta)}\max_{(x,y)\in spt(\pi)} |x-y|$.
It is well-known that both $W_1$ and $W_\infty$ are metrics over $\PP(\erre)$.
For a complete introduction to the Optimal Transport theory, we refer the reader to \cite{villani2009optimal}.

\paragraph{Basic Assumptions}

In the remainder of the paper, we tacitly assume that the underlying distribution $\mu$ satisfies the following properties:
\begin{enumerate*}[label=(\roman*)]
\item The measure $\mu$ is absolutely continuous. We denote with $\rho_\mu$ its density.
\item The support of $\mu$ is an interval, which can be bounded or not, and that $\rho_\mu$ is strictly positive on the interior of the support.
\item The density function  $\rho_\mu$ is differentiable on the support of $\mu$.
\end{enumerate*}
Notice that the cumulative distribution function (c.d.f.) $F_\mu$ of a probability measure $\mu$ satisfying these properties is locally bijective.
Thus the pseudo-inverse function of $F_\mu$, namely $F^{[-1]}_\mu$, is well-defined on $(0,1)$.

\section{The Bayesian Analysis of the Percentile Mechanism}
\label{Wass_formalism}

In this section, we study the percentile mechanisms in the Bayesian Mechanism Design framework.
Specifically, we consider a scenario where the agents' reports are drawn from a shared distribution $\mu$, which satisfies the basic assumptions outlined in Section \ref{sec:preliminaries}.
First, we establish a connection between the $k$-FLP and the Wasserstein distance and use it to investigate the convergence behaviour of the Bayesian approximation ratio as the number of agents tends to infinity.
%
%

\subsection{The \texorpdfstring{$k$}{k}-FLP as a Wasserstein Projection problem}

Given a vector $\vec x:=(x_1,x_2,\dots,x_n)$ containing the reports of $n$ agents, we define the measure $\mu_{\vec x}:=\frac{1}{n}\sum_{i=1}^n\delta_{x_i}$.
Using the map $\vec x \to \mu_{\vec x}$, we are able to associate any agents' profile to a probability measure in $\PP_n(\erre)\subset\PP(\erre)$.
Let us now consider the following minimization problem
\begin{equation}
    \label{eq:projection_problem_gen}
    \min_{\lambda \in \PP_k(\erre)}W_1(\mu_{\vec x},\lambda).
\end{equation}
%
Due to the metric properties of $W_1$, problem \eqref{eq:projection_problem_gen} is also known as the Wasserstein projection problem on $\PP_k(\erre)$.
Since $\PP_k(\erre)$ is closed with respect to any $W_1$ metric, any Wasserstein projection problem admits at least a solution \cite{ambrosio2005gradient}.
When $\mu_{\vec x}$ is clear from the context, we denote with $\nu^{(k,n)}$ the solution to problem \eqref{eq:projection_problem_gen}.
In general, given a measure $\zeta$, we say that $\nu$ is the projection of $\zeta$ over $\mathcal{S}\subset\PP(\erre)$ with respect to $W_1$ if $\nu\in\mathcal{S}$ and $W_1(\zeta,\nu)\le W_1(\zeta,\rho)$ for every $\rho\in\mathcal{S}$.
In particular, $\nu^{(k,n)}$ is the projection of $\mu_{\vec x}$ over $\PP_k(\erre)$ with respect to $W_1$.
The starting point of our Bayesian analysis of the percentile mechanisms connects the $k$-FLP to a Wasserstein projection problem.
In particular, the objective value of problem \eqref{eq:projection_problem_gen} is the same as the objective value of the $k$-FLP.

\begin{theorem}
\label{thm:SocialCost_projection}
Let $\vec x$ be the reports of $n$ agents.
Let $\vec y$ be the solution to the $k$-FLP, i.e. the facility locations that minimize the Social Cost.
Then the set $\{y_j\}_{j\in [k]}$ is the support of a measure $\nu^{(k,n)}$ that solves problem \eqref{eq:projection_problem_gen}.
Moreover, we have that
\[
\SCo_{opt}(\vec x) = W_1(\mu_{\vec x},\nu^{(k,n)}) = \min_{\lambda \in\PP_k(\erre)}W_1(\mu_{\vec x},\lambda).
\]
Vice-versa, if $\nu\in\PP_k(\erre)$ is a solution to problem \eqref{eq:projection_problem_gen}, then its support $\{y_j\}_{j\in [k]}$ is a solution to the $k$-FLP.
%
\end{theorem}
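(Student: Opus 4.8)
The plan is to exploit the freedom in choosing the masses of the projection measure $\lambda\in\PP_k(\erre)$ to reduce the Wasserstein projection problem to a nearest-neighbour assignment problem, which is precisely the $k$-FLP. The key observation is that, once the support points $\{y_j\}_{j\in[k]}$ of $\lambda$ are fixed, minimizing $W_1(\mu_{\vec x},\lambda)$ over the admissible masses returns exactly the cost of assigning each agent to its closest facility. The argument then splits into a lower bound, a matching construction, and the converse.

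First I would prove the lower bound. For any $\lambda\in\PP_k(\erre)$ with support $S:=spt(\lambda)\subseteq\{y_j\}_{j\in[k]}$ and any coupling $\pi\in\Pi(\mu_{\vec x},\lambda)$, all the mass $\tfrac{1}{n}$ sitting at $x_i$ must be transported to points of $S$, each unit costing at least $\min_{y\in S}|x_i-y|$. Summing over $i$ gives
\[
W_1(\mu_{\vec x},\lambda)\;\geq\;\frac{1}{n}\sum_{i\in[n]}\min_{y\in S}|x_i-y|\;\geq\;\min_{\vec y\in\erre^k}\frac{1}{n}\sum_{i\in[n]}\min_{j\in[k]}|x_i-y_j|\;=\;\SCo_{opt}(\vec x),
\]
where the second inequality holds because $S$ consists of at most $k$ points and hence induces a feasible placement. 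Taking the infimum over $\lambda$ yields $\min_{\lambda\in\PP_k(\erre)}W_1(\mu_{\vec x},\lambda)\geq\SCo_{opt}(\vec x)$.

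Next I would establish the matching upper bound by an explicit construction. Let $\vec y$ be an optimal $k$-FLP solution and, for each $j$, let $A_j\subseteq[n]$ collect the agents whose closest facility is $y_j$, breaking ties arbitrarily so that the $A_j$ partition $[n]$. Setting $\lambda^*:=\sum_{j\in[k]}\tfrac{|A_j|}{n}\,\delta_{y_j}\in\PP_k(\erre)$ and taking the plan that moves the mass of each $x_i$ with $i\in A_j$ entirely onto $y_j$, one obtains a coupling with first marginal $\mu_{\vec x}$, second marginal $\lambda^*$, and cost $\tfrac{1}{n}\sum_{i}\min_{j}|x_i-y_j|=\SCo_{opt}(\vec x)$. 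Thus $W_1(\mu_{\vec x},\lambda^*)\leq\SCo_{opt}(\vec x)$, which combined with the lower bound forces equality and exhibits a projection $\nu^{(k,n)}:=\lambda^*$ whose support is exactly $\{y_j\}_{j\in[k]}$.

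Finally, for the converse I would show that every solution $\nu$ of the projection problem has an optimal support. Writing $S_\nu:=spt(\nu)$, the lower bound chain gives $\SCo_{opt}(\vec x)=W_1(\mu_{\vec x},\nu)\geq\frac{1}{n}\sum_i\min_{y\in S_\nu}|x_i-y|\geq\SCo_{opt}(\vec x)$, so all inequalities collapse to equalities and $S_\nu$ realizes the optimal Social Cost, i.e.\ it solves the $k$-FLP. I expect the only delicate point to be the rigorous check that the nearest-neighbour plan is a genuine element of $\Pi(\mu_{\vec x},\lambda^*)$ and that ties (agents equidistant from two facilities) can be resolved without affecting the cost; both are routine once the partition $\{A_j\}_{j\in[k]}$ is fixed.
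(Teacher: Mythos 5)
Your proof is correct and follows essentially the same route as the paper's: both hinge on the construction $\lambda^*=\sum_j \frac{|A_j|}{n}\delta_{y_j}$ with the nearest-neighbour coupling, and both derive optimality by comparing against the nearest-neighbour assignment cost of any competing support. The one (minor but welcome) difference is that you establish the lower bound $W_1(\mu_{\vec x},\lambda)\ge \frac{1}{n}\sum_i\min_{y\in spt(\lambda)}|x_i-y|$ directly as an inequality valid for every coupling, whereas the paper argues by contradiction and in doing so asserts an equality $W_1(\mu_{\vec x},\tilde\nu)=\frac{1}{n}\sum_j\sum_{x_i\in A_j'}|x_i-y_j'|$ that only holds with $\ge$ for an arbitrary competitor $\tilde\nu$ whose masses need not match the nearest-neighbour proportions; your version sidesteps that imprecision.
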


\begin{proof} 
Let $\vec x$ be the vector containing the reports of $n$ agents, and let $ \vec y$ be the vector containing the optimal location for $k$ facilities when the agents are located according to $\vec x$.
Without loss of generality, we assume that the closest facility to each agent $x_i$ is unique so that the sets $A_j$, defined as $A_j :=\Big\{ x_i \; : \; \min_{l\in [k]}|x_i-y_l|=|x_i-y_j| \Big\}$,
are well-defined and disjoint.
First, we show that, given an optimal facility location $\vec y$, it is possible to retrieve a measure $\nu\in\PP_k(\erre)$ that solves the projection problem \eqref{eq:projection_problem_gen} and whose support is $\{y_j\}_{j\in [k]}$.
For every $y_j$, let us set $\nu_j=\frac{\ell_j}{n}$, where $\ell_j:=|A_j|$ is the number of agents whose closest facility is located at $y_j$.
We then set $\nu=\sum_{j\in [k]}\nu_j\delta_{y_j}$.
Since $A_j$ are disjoint sets, we have $\nu\in\PP_k(\erre)$.
Let us now consider the transportation plan, namely $\pi$, between $\mu_{\vec x}$ and $\nu$ defined as
\[
\pi_{i,j}:=\pi_{x_i,y_j}=\begin{cases}
    \frac{1}{n}\quad\quad \text{if}\quad x_i\in A_j\\
    0 \quad \quad \text{otherwise.}
\end{cases}
\]
Since according to $\pi$ every agent goes to its closest facility, $\pi$ is optimal, thus we have $W_1(\mu_{\vec x},\nu)=\sum_{i\in [n],j\in [k]}|x_i-y_j|\pi_{i,j}=\frac{1}{n}\sum_{j \in [k]}\sum_{x_i\in A_j}|x_i-y_j|$.
We now show that $\nu$ solves problem \eqref{eq:projection_problem_gen}.
Toward a contradiction, let us assume that $\tilde \nu =\sum_{j=1}^{k}\tilde \nu_j\delta_{\tilde y_j}  \in \PP_k(\erre)$ is such that $W_1(\mu_{\vec x},\tilde \nu)<W_1(\mu_{\vec x},\nu)$.
Let us define the partition of agents $A_j'$ related to the set of points $\{y_j'\}_{j\in [k]}$.\footnote{Again, without loss of generality, we can assume that the facility that is closest to a given agent is unique.}
Then we have
\begin{equation}
    \frac{1}{n}\sum_{j \in [k]}\sum_{x_i\in A_j'}|x_i-y_j'|=W_1(\mu_{\vec x},\tilde \nu_j)<W_1(\mu_{\vec x},\nu)=\frac{1}{n}\sum_{j \in [k]}\sum_{x_i\in A_j}|x_i-y_j|,
\end{equation}
which contradicts the optimality of $\vec y$, proving the first part of the Theorem.
For the inverse implication, it suffices to repeat the same argument backwards.
Indeed, let $\nu'$ be a solution to the $W_1$ Projection problem.
Toward a contradiction, let us assume that the support of $\nu'$ is not a solution to the $k$-FLP.
Then, given a solution to the $k$-FLP problem, we can use the argument used in the first part of the proof  to build a new measure that has a lower cost than $\nu'$, which would contradict the optimality of the initial solution.
\qed
\end{proof}

By restricting the set on which the projection problem is defined, we retrieve a similar characterization for the cost of any $k$-facility location mechanism.

\begin{theorem}
\label{thm:swich_bound}
Let $f:\erre^n\to \erre^k$ be a $k$-facility location mechanism.
Then, the following identity holds
\begin{equation}
    \label{eq:percentile_SC}
\SCo_f(\vec x) = \min_{\{\lambda_j\}_{j\in [k]}\subset \erre} W_1(\mu_{\vec x},\lambda),
\end{equation}
where $\lambda=\sum_{j\in [k]}\lambda_j\delta_{y_{j}}$ and $\vec y =(y_1,y_2,\dots,y_k)=f(\vec x)$.
%
\end{theorem}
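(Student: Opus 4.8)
The plan is to mirror the proof of Theorem~\ref{thm:SocialCost_projection}, exploiting that the only difference here is that the support points $\{y_j\}_{j\in[k]}=f(\vec x)$ are now \emph{fixed} by the mechanism rather than chosen optimally. Consequently the projection is no longer taken over all of $\PP_k(\erre)$ but over the subset of probability measures supported on $\{y_j\}_{j\in[k]}$, so that the only free variables are the weights $\lambda_j$. I would establish the identity by proving both inequalities; equivalently, one observes that jointly minimizing over the weights $\lambda_j$ and over the transport plan collapses to a single decoupled assignment problem.

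First I would exhibit a feasible competitor realizing the value $\SCo_f(\vec x)$. As in the previous proof, I partition the agents into the sets $A_j:=\{x_i : \min_{l}|x_i-y_l|=|x_i-y_j|\}$ (each agent assigned to a nearest facility, breaking ties arbitrarily), set $\lambda_j:=|A_j|/n$, and let $\pi$ send the mass $1/n$ at each $x_i\in A_j$ to $y_j$. Because the $A_j$ partition the agents, the weights satisfy $\sum_j\lambda_j=1$, so $\lambda=\sum_j\lambda_j\delta_{y_j}$ is a legitimate measure supported on $\{y_j\}_{j\in[k]}$, and $\pi\in\Pi(\mu_{\vec x},\lambda)$ has cost exactly $\frac1n\sum_i\min_j|x_i-y_j|=\SCo_f(\vec x)$. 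This yields $\min_\lambda W_1(\mu_{\vec x},\lambda)\le \SCo_f(\vec x)$.

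For the reverse inequality, I would argue that no choice of weights can beat the nearest-facility assignment. Any $\lambda=\sum_j\lambda_j\delta_{y_j}$ and any $\pi\in\Pi(\mu_{\vec x},\lambda)$ decompose into masses $\pi_{ij}\ge 0$ with row sums $\sum_j\pi_{ij}=1/n$, and transport cost $\sum_{i,j}\pi_{ij}|x_i-y_j|$. Since for each fixed $i$ one has $\sum_j\pi_{ij}|x_i-y_j|\ge \big(\sum_j\pi_{ij}\big)\min_j|x_i-y_j|=\frac1n\min_j|x_i-y_j|$, summing over $i$ gives $W_1(\mu_{\vec x},\lambda)\ge \SCo_f(\vec x)$ for every admissible $\lambda$, hence the same bound for the minimum.

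The only point deserving care — and what I view as the crux — is the interplay between the weight constraint and the transport constraint. The reverse bound requires that the column-sum (second marginal) constraints $\sum_i\pi_{ij}=\lambda_j$ not obstruct the per-agent minimization; this is automatic, since $\sum_j\lambda_j=\sum_j\sum_i\pi_{ij}=\sum_i(1/n)=1$ is already forced by the first marginal, so the weights induced by the nearest-facility plan are always consistent with being a probability vector. In other words, minimizing over the weights $\lambda_j$ imposes no effective constraint beyond nonnegativity, and the joint minimization over $(\lambda,\pi)$ reduces to the decoupled problem $\min_{\pi\ge 0,\ \sum_j\pi_{ij}=1/n}\sum_{i,j}\pi_{ij}|x_i-y_j|$, whose value is manifestly $\SCo_f(\vec x)$.
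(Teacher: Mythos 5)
Your proposal is correct and follows essentially the same route as the paper: both construct the nearest-facility partition $A_j$, take the competitor $\lambda_j=|A_j|/n$ with the plan sending each agent's mass to its closest facility, and then argue no other choice of weights can do better. Your reverse inequality, obtained by decoupling the cost per agent via $\sum_j\pi_{ij}|x_i-y_j|\ge\frac1n\min_j|x_i-y_j|$, is a slightly more explicit rendering of the paper's reallocation/contradiction argument, but it is the same underlying idea.
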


\begin{proof}
Let $f$ be a mechanism, $\vec x$ the vector containing the reports of $n$ agents, and let us denote with $\vec y$ the vector containing the positions returned by the mechanism $f$, so that $\vec y=f(\vec x)$. 
For every $j\in [k]$, let us denote with $A_j$ the set of agents that are closer to the facility placed at $y_j$.
Without loss of generality, we assume that every $A_j$ is disjoint from the other so that $A_j\cap A_r=\emptyset$ for every $j\neq r$.
Let us now define $\nu^{(n)}$ as
\[
\nu^{(n)}=\sum_{j\in [k]}\nu_j^{(n)}\delta_{y_j}
\]
where $\nu_j^{(n)}=\frac{\ell_j}{n}$ and $\ell_j=|A_j|$.
We now show that $\nu^{(n)}$ is a solution to problem \eqref{eq:percentile_SC}.
Indeed, the discrete probability measure $\pi$ is defined as
\[
\pi_{i,j}:=\pi_{x_i,y_j}=\begin{cases}
\frac{1}{n} \quad\quad \text{if}\quad x_i\in A_j\\
0 \quad \quad \text{otherwise,}
\end{cases}
\]
is a transportation plan between $\mu_{\vec x}$ and $\nu^{(n)}$.
Furthermore, since according to $\pi$ every agent goes to its closest facility, we have
\[
\sum_{i\in [n]}|x_i-y_j|\pi_{i,j}=W_1(\mu_{\vec x},\nu^{(n)}).
\]
Finally, if $\Tilde{\nu}$ is such that $spt(\Tilde{\nu})=spt(\nu)=\{y_j\}_{j\in [k]}$ and $W_1(\mu_{\vec x},\Tilde{\nu})<W_1(\mu_{\vec x},\nu^{(n)})$, we infer that there exists at least one agent that can be reallocated to a closer facility, which would contradict the definition of $A_j$.
\qed
\end{proof}

Notice that the projection problem \eqref{eq:percentile_SC} is a further restricted version of the projection problem \eqref{eq:projection_problem_gen}.
Indeed, in \eqref{eq:projection_problem_gen}, the support of the solution can be any subset of $\erre$ containing $k$ elements, while in \eqref{eq:percentile_SC}, the support of the solution is fixed by the mechanism $f$.

\subsection{The Bayesian Analysis of the Percentile Mechanisms}
\label{sec:bayesian}

In this section, we use the results presented in Theorem \ref{thm:SocialCost_projection} and Theorem \ref{thm:swich_bound} to study the limiting behaviour of the Bayesian approximation ratio of any percentile mechanism $\PMp$ with $\vec{v} \in (0,1)^k$.
From Theorem \ref{thm:SocialCost_projection}, the $k$-FLP is equivalent to a projection problem in the space of probability distributions with respect to $W_1$.
It is well-known that, in order to ensure that the $W_1$ distance between two measures is finite, both the measures must have a finite first moment \cite{villani2009optimal}.
We recall that a measure $\mu$ has a finite first moment if
\begin{equation}
\label{eq:finite_moment}
    \int_{\mathbb{R}}|x|d\mu < +\infty.
\end{equation}

\begin{lemma}
\label{lmm:lemma1}
Let $\vec X_n:=(X_1,X_2,\dots,X_n)$ be the random vector describing the reports of $n$ i.i.d. agents distributed as $\mu$.
If $\mu$ satisfies \eqref{eq:finite_moment}, then, for every $k\in \mathbb{N}$, we have that $\EE[\SCo_{opt}(\vec X_n)]$ converges to $W_1(\mu,\nu^{(k)})$ as $n\to\infty$, where $\nu^{(k)}$ is the solution to the following projection problem
\begin{equation}
    \label{eq:projection_real}
    \min_{\lambda\in \PP_k(\erre)}W_1(\mu,\lambda).
\end{equation}
In particular, we have that $\EE[\SCo_{opt}(\vec X_n)]$ is strictly positive for $n$ large enough.
\end{lemma}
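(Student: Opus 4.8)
```latex
\textbf{Proof proposal.}

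The plan is to establish the convergence $\EE[\SCo_{opt}(\vec X_n)] \to W_1(\mu,\nu^{(k)})$ by exploiting the projection characterization of Theorem~\ref{thm:SocialCost_projection} together with the convergence of the empirical measure $\mu_{\vec X_n}$ to $\mu$ in the Wasserstein sense. First I would invoke Theorem~\ref{thm:SocialCost_projection} to rewrite the quantity of interest purely in terms of optimal transport: for each realization $\vec x$ we have $\SCo_{opt}(\vec x) = \min_{\lambda\in\PP_k(\erre)} W_1(\mu_{\vec x},\lambda)$. Denoting this projection functional by $\Phi(\zeta):=\min_{\lambda\in\PP_k(\erre)} W_1(\zeta,\lambda)$, the goal reduces to showing $\EE[\Phi(\mu_{\vec X_n})] \to \Phi(\mu) = W_1(\mu,\nu^{(k)})$.

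The two key ingredients are continuity of $\Phi$ and convergence of the empirical measure. For continuity, I would show that $\Phi$ is $1$-Lipschitz with respect to $W_1$, i.e. $|\Phi(\alpha)-\Phi(\beta)| \le W_1(\alpha,\beta)$. This follows from the triangle inequality for $W_1$: if $\lambda_\beta$ is optimal for $\beta$, then $\Phi(\alpha)\le W_1(\alpha,\lambda_\beta) \le W_1(\alpha,\beta)+W_1(\beta,\lambda_\beta) = W_1(\alpha,\beta)+\Phi(\beta)$, and symmetrically. Combining this with $\Phi(\mu_{\vec X_n}) \le W_1(\mu_{\vec X_n},\mu) + \Phi(\mu)$ and its reverse gives the pointwise bound $|\Phi(\mu_{\vec X_n})-\Phi(\mu)| \le W_1(\mu_{\vec X_n},\mu)$. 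For the convergence of the empirical measure, since $\mu$ has a finite first moment by \eqref{eq:finite_moment}, the law of large numbers for empirical measures guarantees $W_1(\mu_{\vec X_n},\mu)\to 0$ almost surely as $n\to\infty$.

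To pass from almost-sure convergence of $\Phi(\mu_{\vec X_n})$ to convergence of its expectation, I would establish uniform integrability or a suitable dominating bound so that dominated (or Vitali) convergence applies. The natural route is to control $\EE[W_1(\mu_{\vec X_n},\mu)]$ directly: the expected Wasserstein distance between the empirical measure and $\mu$ converges to zero provided $\mu$ has a finite first moment, and this expectation bound also furnishes the domination needed. Since $0\le \Phi(\mu_{\vec X_n}) \le W_1(\mu_{\vec X_n},\delta_{y_0})$ for any fixed singleton (taking $k=1$ as an upper bound and then noting $\Phi$ is nonincreasing in $k$ does not worsen this), and the latter has uniformly bounded expectation, we obtain $\EE[\Phi(\mu_{\vec X_n})]\to\Phi(\mu)$. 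The main obstacle I anticipate is the interchange of limit and expectation: establishing $L^1$-convergence of $W_1(\mu_{\vec X_n},\mu)$ (not merely almost-sure convergence) requires invoking a quantitative rate or uniform-integrability argument for the empirical Wasserstein distance under only a first-moment hypothesis, which is the most delicate point since the convergence rate of $W_1(\mu_{\vec X_n},\mu)$ in one dimension is classical but its $L^1$ control must be stated carefully.

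Finally, for strict positivity of $\EE[\SCo_{opt}(\vec X_n)]$ for large $n$: since $\mu$ is absolutely continuous by the basic assumptions, it cannot be supported on $k$ points, so $\Phi(\mu) = W_1(\mu,\nu^{(k)}) > 0$ (the projection of a nonatomic measure onto $\PP_k(\erre)$ has strictly positive distance). By the convergence just established, $\EE[\SCo_{opt}(\vec X_n)]$ exceeds $\tfrac12 \Phi(\mu) > 0$ for all sufficiently large $n$, which completes the claim.
```
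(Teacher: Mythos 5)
Your proposal is correct and follows essentially the same route as the paper: the two-sided triangle inequality you use to prove that the projection functional $\Phi$ is $1$-Lipschitz in $W_1$ is exactly the paper's chain of inequalities yielding $|W_1(\mu,\nu^{(k)})-W_1(\mu_{\vec x},\nu^{(k,n)})|\le W_1(\mu,\mu_{\vec x})$, after which both arguments conclude via $\EE[W_1(\mu_{\vec X_n},\mu)]\to 0$ under the finite first-moment hypothesis (the paper simply cites Bobkov--Ledoux for this, so the uniform-integrability detour you flag as delicate is not needed). The strict-positivity argument also matches the paper's.
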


\begin{proof}
Let $\nu^{(k,n)}$ be the solution to problem \eqref{eq:projection_problem_gen} and let $\nu^{(k)}$ be the solution to problem \eqref{eq:projection_real}.
Owing to the triangular inequality and to the properties of the projection problem, we have
\[
W_1(\mu_{\vec x},\nu^{(k,n)})\le W_1(\mu_{\vec x},\nu^{(k)})\le W_1(\mu_{\vec x},\mu)+W_1(\mu,\nu^{(k)})
\]
and, similarly
\[
W_1(\mu,\nu^{(k)})\le W_1(\mu,\nu^{(k,n)})\le W_1(\mu,\mu_{\vec x})+W_1(\mu_{\vec x},\nu^{(k,n)}),
\]
which implies $|W_1(\mu,\nu^{(k)})-W_1(\mu_{\vec x},\nu^{(k,n)})|\le W_1(\mu,\mu_{\vec x})$ and thus 
\[
\EE[|W_1(\mu,\nu^{(k)})-W_1(\mu_{\vec x},\nu^{(k,n)})|]\le \EE[W_1(\mu,\mu_{\vec x})].
\]
Since $\lim_{n\to\infty}\EE[W_1(\mu,\mu_{\vec x})]=0$, see \cite{bobkov2019one}, we infer that $\EE[W_1(\mu_{\vec x},\nu^{(k,n)})]$ converges to $W_1(\mu,\nu^{(k)})$ as $n\to \infty$.
Finally, since $W_1(\mu,\nu^{(k)})$ is strictly positive, for $n$ large enough, we have $\EE[W_1(\mu_{\vec x},\nu^{(k,n)})]$ is strictly positive as well.

\qed\end{proof}

It is worthy of notice that in the proof of Lemma \ref{lmm:lemma1}, we have shown a slightly stronger result: the random variable $\SCo_{opt}(\vec X_n)$ converges with respect to the $L^1$ norm to the constant value $W_1(\mu,\nu^{(k)})$.
Moving on to the limit cost of the mechanism, we observe that the set characterizing the projection problem \eqref{eq:percentile_SC} is dependent on the output of the percentile mechanism.
Hence, the argument used to prove Lemma \ref{lmm:lemma1} cannot be directly applied in this case.
However, by employing a more sophisticated construction and leveraging the convergence properties of the $k$-th order statistics, it is possible to identify the limit of $\EE[\SCo_{\vec v}(\vec X_n)]$ and ensure convergence by imposing mild assumptions on the percentile vector $\vec v$.

\begin{lemma}
\label{lmm:lemma2}
Let $\mu$ be a measure that satisfies \eqref{eq:finite_moment}.
Given $k\in \mathbb{N}$, let $\vec v\in (0,1)^k$ be a percentile vector.
Then, $\EE[SC_{\vec v}(\vec X_n)]$ converges to $W_1(\mu,\nu_{Q_{\vec v}})$, where $\nu_{Q_{\vec v}}$ is defined as
\begin{equation}
    \label{eq:nu_qp}
    \nu_{Q_{\vec v}}:=\sum_{i=1}^k (F_\mu(z_i)-F_\mu(z_{i-1}))\delta_{F_{\mu}^{[-1]}(v_i)},
\end{equation}
where $z_i=\frac{(F_{\mu}^{[-1]}(v_i)+F_{\mu}^{[-1]}(v_{i+1}))}{2}$ for $i=1,\dots,k-1$, $z_0=\inf_{x\in I}x$, and $z_k=\sup_{x\in I}x$, $F_\mu$ is the cumulative distribution function of $\mu$, and $F_\mu^{[-1]}$ is the pseudo-inverse function related to $\mu$.
\end{lemma}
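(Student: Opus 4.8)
The plan is to combine the fixed-support projection formula of Theorem~\ref{thm:swich_bound} with the convergence of the relevant order statistics to the quantiles of $\mu$. Write $q_j := F_\mu^{[-1]}(v_j)$ for the limiting quantiles and $y_j^{(n)} := X_{(i_j)}$, with $i_j=\floor{(n-1)v_j}+1$, for the (random) output points of $\PMp$. For a support set $S=\{s_1,\dots,s_k\}$ introduce the nearest-facility function $g_S(x):=\min_{j\in[k]}|x-s_j|$, which is $1$-Lipschitz in $x$ and, for two index-matched support sets, satisfies $|g_S(x)-g_{S'}(x)|\le \max_{j}|s_j-s_j'|$. By Theorem~\ref{thm:swich_bound}, the projection of $\mu_{\vec X_n}$ onto measures supported on $\{y_j^{(n)}\}_{j\in[k]}$ is attained by sending each report to its nearest output point, so that $\SCo_{\vec v}(\vec X_n)=\int g_{Y^{(n)}}\,d\mu_{\vec X_n}=\tfrac1n\sum_{i=1}^n g_{Y^{(n)}}(X_i)$.

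The next step is to identify the limit. Applying the same nearest-point reasoning to the continuous measure $\mu$ and the fixed support $\{q_1\le\dots\le q_k\}$, the Voronoi cells have boundaries exactly at the midpoints $z_i=(q_i+q_{i+1})/2$; each cell $[z_{i-1},z_i]$ carries mass $F_\mu(z_i)-F_\mu(z_{i-1})$, and the nearest-point pushforward of $\mu$ is precisely $\nu_{Q_{\vec v}}$ as defined in \eqref{eq:nu_qp}. Monotonicity of $F_\mu^{[-1]}$ guarantees $z_{i-1}\le q_i\le z_i$, so each cell is indeed assigned to its own quantile. Hence $W_1(\mu,\nu_{Q_{\vec v}})=\int g_Q\,d\mu$, and since $g_Q$ is dominated by $|x|+|q_1|$ it is $\mu$-integrable by \eqref{eq:finite_moment}; in particular $\EE[\tfrac1n\sum_i g_Q(X_i)]=\int g_Q\,d\mu=W_1(\mu,\nu_{Q_{\vec v}})$ for \emph{every} $n$, so no law of large numbers is needed for this term.

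Applying the matched-index Lipschitz bound pointwise and taking expectations then gives
\begin{equation*}
\big|\,\EE[\SCo_{\vec v}(\vec X_n)]-W_1(\mu,\nu_{Q_{\vec v}})\,\big|
\le \EE\Big[\tfrac1n\sum_{i=1}^n\big|g_{Y^{(n)}}(X_i)-g_Q(X_i)\big|\Big]
\le \EE\big[\max_{j\in[k]}|y_j^{(n)}-q_j|\big].
\end{equation*}
It therefore remains to show $\EE[\max_j|y_j^{(n)}-q_j|]\to 0$, which, since $k$ is finite, reduces to proving $\EE[|y_j^{(n)}-q_j|]\to 0$ for each fixed $j$.

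This last point is the main obstacle. For fixed $v_j\in(0,1)$ the index satisfies $i_j/n\to v_j$, so under the standing assumptions on $\mu$ the sample quantile $y_j^{(n)}$ converges to $q_j$ almost surely; the genuinely delicate part is upgrading this to convergence in $L^1$. Here I would invoke Bahadur's representation
\begin{equation*}
y_j^{(n)}=q_j+\frac{v_j-\hat F_n(q_j)}{\rho_\mu(q_j)}+R_n^{(j)},
\end{equation*}
where $\hat F_n$ is the empirical c.d.f. The leading term has expectation of order $n^{-1/2}$, since $n\hat F_n(q_j)\sim\mathrm{Bin}(n,v_j)$; controlling the remainder $R_n^{(j)}$ in expectation, equivalently establishing uniform integrability of $\{|y_j^{(n)}-q_j|\}_n$ via a uniform second-moment bound on these central order statistics, yields $\EE[|y_j^{(n)}-q_j|]\to 0$. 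Summing over the finitely many $j$ then closes the argument. Everything preceding this estimate is transport-theoretic bookkeeping resting on Theorem~\ref{thm:swich_bound}; the heart of the proof is the uniform-integrability/remainder control on the sample quantiles.
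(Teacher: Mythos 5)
Your argument is correct, and while it ends at the same analytic core as the paper, the reduction to that core is genuinely different and cleaner. The paper sandwiches $W_1(\mu_{\vec x},\nu^{(k,n)})$ between chains of triangle inequalities involving two auxiliary measures ($\gamma_n$, with the limit weights on the empirical support, and $\eta_n$, with the empirical weights on the limit support), arriving at the bound $\EE[|W_1(\mu,\nu_{Q_{\vec v}})-\SCo_{\vec v}(\vec X_n)|]\le \EE[W_1(\mu,\mu_{\vec x})]+\EE[W_1(\nu_{Q_{\vec v}},\gamma_n)]+\EE[W_1(\nu^{(k,n)},\eta_n)]$; this forces it to invoke the nontrivial external fact $\EE[W_1(\mu,\mu_{\vec x})]\to 0$ from Bobkov--Ledoux in addition to the quantile estimates. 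You instead exploit that $\SCo_{\vec v}(\vec X_n)=\tfrac1n\sum_i g_{Y^{(n)}}(X_i)$ by definition of the social cost, that $\EE[\tfrac1n\sum_i g_Q(X_i)]=\int g_Q\,d\mu=W_1(\mu,\nu_{Q_{\vec v}})$ \emph{exactly} for every $n$, and that $|g_S-g_{S'}|\le\max_j|s_j-s_j'|$ for index-matched supports; this collapses the entire error to $\EE[\max_j|y_j^{(n)}-q_j|]$ and removes the empirical-measure convergence result from the argument altogether (it also makes transparent that only convergence of the expectations, not $L^1$ convergence of the random cost, is being claimed). From that point on the two proofs coincide: both rest on Bahadur's representation, with the leading term handled by the binomial variance bound and the remainder by almost-sure convergence plus uniform integrability of the central order statistics (the paper does this via the explicit order-statistic density and Vitali's theorem). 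You correctly flag this uniform-integrability step as the real content and leave it as a sketch; one small caution is that your proposed ``uniform second-moment bound'' should be justified for $n$ large using the $[F_\mu]^{\floor{(n-1)v_j}}[1-F_\mu]^{n-\floor{(n-1)v_j}}$ damping in the order-statistic density, since condition \eqref{eq:finite_moment} alone gives $\mu$ only a first moment.
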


\begin{proof}
First, we notice that the measure \eqref{eq:nu_qp} is well-defined since there exists a $j$ such that $v_j\neq 0,1$.
Let $\vec v=(v_1,\dots, v_k)$ be a percentile vector, $\vec x$ be the vector containing the reports of the agents, $\nu^{(k,n)}$ be the solution to problem \eqref{eq:percentile_SC}, and let $\vec y$ be the vector containing the facility positions returned by the percentile mechanisms, so that $\nu^{(k,n)}=\sum_{j\in [k]}(\nu^{(k,n)})_j\delta_{y_j}$.
%
%
To lighten-up the notation, we set $\nu_{Q_{\vec v}}:=\nu_{Q}$, thus $\nu_Q:=\sum_{j\in[k]}(\nu_Q)_j\delta_{F_\mu^{[-1]}(v_j)}$, where $(\nu_Q)_j := (F_\mu(z_j)-F_\mu(z_{j-1}))$, where $z_0=-\infty$, $z_k=+\infty$, and $z_i=\frac{y_i+y_{i+1}}{2}$ for every $i=2,\dots, k-1$.
We now show that $\nu_Q$ is the solution to the following minimization problem
\begin{equation}
\label{eq:minproblemnuQ}
    \min_{\{\lambda_j\}_{j\in[k]}}W_1(\lambda,\mu),
\end{equation}
where $\lambda=\sum_{j=1}^k\lambda_j\delta_{F^{[-1]}(v_j)}$.
We can rewrite the $W_1$ distance between $\mu$ and $\nu_{Q}$ as it follows
\[
W_1(\mu,\nu_{Q})=\sum_{j=1}^k\int_{F^{[-1]}(\sum_{i=1}^{k}(\nu_{Q})_i)}^{F^{[-1]}(\sum_{i=1}^{k+1}(\nu_{Q})_i)}|x-F^{[-1]}(v_j)|d\mu.
\]
By definition of $\nu_{Q}$ we have that $\sum_{i=1}^{j}(\nu_{Q_{\vec v}})_i=F_\mu(z_{j})$, thus
\begin{equation}
\label{eq:nuQclsoest}
    W_1(\mu,\nu_{Q})=\sum_{j=0}^{k}\int_{z_{j}}^{z_{j+1}}|x-F^{[-1]}(v_j)|d\mu=\int_{-\infty}^{+\infty}\min_{j\in[k]}|x-F^{[-1]}(v_j)|d\mu,
\end{equation}
where we used the fact that $z_0=-\infty$, $z_k=+\infty$, and $z_i=\frac{F^{[-1]}(v_i)+F^{[-1]}(v_{i+1})}{2}$ for every $i=2,\dots, k-1$.
Thus every point in the support of $\mu$ is assigned to its closest facility, thus $\nu_{Q}$ is a solution to \eqref{eq:minproblemnuQ}.
We are now ready to study the convergence of $\EE[SC_{\vec v}(\vec X_n)]$.
For every $n\in \mathbb{N}$, let us define $\gamma_n=\sum_{j\in [k]}(\nu_Q)_j\delta_{y_j}$, where $y_j$ is the $j$-th point in the support of $\nu^{(k,n)}$.
By a similar argument to the one used to prove Lemma \ref{lmm:lemma1}, we have
\begin{equation*}
    W_1(\mu_{\vec x},\nu^{(k,n)})\le W_1(\mu_{\vec x},\gamma_n)\le W_1(\mu_{\vec x},\mu)+W_1(\mu,\nu_Q)+W_1(\nu_Q,\gamma_n).
\end{equation*}
For every $n\in \mathbb{N}$, let us now define $\eta_n:=\sum_{j\in [k]}(\nu^{(k,n)})_j\delta_{F_\mu^{[-1]}(v_j)}$.
We then have
\[
W_1(\mu,\nu_Q)\le W_1(\mu,\eta_n)\le W_1(\mu,\mu_{\vec x})+W_1(\mu_{\vec x},\nu^{(k,n)})+W_1(\nu^{(k,n)},\eta_n).
\]
Since $W_1(\nu_Q,\gamma_n),W_1(\nu^{(k,n)},\eta_n)\ge 0$, we infer that
\begin{equation}
\label{eq:estimation_above_lemma_2_start}
\EE[|W_1(\mu,\nu_Q)-W_1(\mu_{\vec x},\nu^{(k,n)})|]\le \EE[W_1(\mu,\mu_{\vec x})]+\EE[W_1(\nu_Q,\gamma_n)]+\EE[W_1(\nu^{(k,n)},\eta_n)].
\end{equation}
From \cite{bobkov2019one}, we have that $\lim_{n\to\infty}\EE[W_1(\mu,\mu_{\vec x})]= 0$.
To conclude the thesis, we need to prove that both $\EE[W_1(\nu_Q,\gamma_n)]$ and $\EE[W_1(\nu^{(k,n)},\eta_n)]$ go to zero as $n\to \infty$.
If we express $\nu^{(k,n)}$ and $\eta_n$ explicitly, we infer that $\EE[W_1(\nu^{(k,n)},\eta_n)]$ converges to zero if the $(\floor{v_j(n-1)}+1)$-th quantile converges to $F_\mu^{[-1]}(v_j)$ with respect to the $l_1$ norm, which can be done using Bahadur's representation formula \cite{de1979bahadur} and leveraging our hypothesis on the regularity of $\mu$.
First, we prove that both $\EE[W_1(\nu_Q,\gamma_n)]$ and $\EE[W_1(\nu^{(k,n)},\eta_n)]$ go to zero as $n\to \infty$.
%
%
Without loss of generality, we show that $\EE[W_1(\nu_Q,\gamma_n)]\to 0$ as $n\to \infty$ \footnote{Indeed, the same argument can be adapted to show that also $\EE[W_1(\nu^{(k,n)},\eta_n)]$ goes to $0$}.
Given a set of reports $\vec x = (x_1,\dots,x_n)$, let $\vec y =(y_1,\dots,y_k)$ be the facility locations returned by the percentile mechanism $\PMp$.
Since every $y_j$ is the $(\floor{(n-1)v_j }+1)$-th statistic of the sample vector $\vec x$, we have that
\begin{align*}
    \EE[W_1(\nu_Q,\gamma_n)]&\le \sum_{j\in [k]}(\nu_Q)_j \EE[|X_{\floor{(n-1)v_j}+1}-F^{[-1]}_\mu(v_j)|]\\
    &\le \sum_{j\in [k]}\EE[|X_{\floor{(n-1)v_j}+1}-F^{[-1]}_\mu(v_j)|],
\end{align*}
where $X_{\floor{(n-1)v_j}+1}$ is the $(\floor{(n-1)v_j}+1)$-th order statistic of the sample vector $\vec x$.
Using Bahadur's representation formula (see \cite{10.1214/aoms/1177699450} and \cite{de1979bahadur}), we have that
\begin{equation}
    \label{eq:bahadur}
    X_{\floor{(n-1)v_j}+1}-F_\mu^{[-1]}(v_j)=\frac{S_n(F_\mu^{[-1]}(v_j))-v_j)}{\rho_\mu(F_\mu^{[-1]}(v_j))}+R_n,
\end{equation}
where $R_n$ is the rest of the Bahadur's formula, for which holds $R_n\le O(n^{-\frac{3}{4}})$ with probability $1$, $\rho_\mu$ is the density of $\mu$, and $S_n(t)=\frac{1}{n}\sum_{i=1}^n \mathbb{I}(X_i)_{\{X_i\le t\}}$ where 
\[
\mathbb{I}(X_i)_{\{X_i\le t\}}=\begin{cases}
    1 \quad \text{if}\quad X_i\le t,\\
    0 \quad \text{otherwise.}    
\end{cases}
\]
By taking the absolute value and the expected value of both sides of \eqref{eq:bahadur}, we have
\begin{equation}
    \label{eq:lefthandsidebahdur}
    \EE[|X_{\floor{(n-1)v_j}+1}-F_\mu^{[-1]}(v_j)|]\le \EE\Bigg[\Bigg|\frac{S_n(F_\mu^{[-1]}(v_j))-v_j}{\rho_\mu(F_\mu^{[-1]}(v_j))}\Bigg|\Bigg]+\EE[|R_n|].
\end{equation}
To conclude, we need to prove that the right-hand side of equation \eqref{eq:lefthandsidebahdur} goes to zero.
First, we observe that $\frac{S_n(F_\mu^{[-1]}(v_j))-v_j)}{\rho_\mu(F_\mu^{[-1]}(v_j))}$ converges to $0$ almost surely (use the Law of Large Numbers), and it is bounded.
This, combined with the i.i.d. assumption and condition \eqref{eq:finite_moment}, implies that $\frac{S_n(F_\mu^{[-1]}(v_j))-v_j)}{\rho_\mu(F_\mu^{[-1]}(v_j))}$ converges to $0$ with respect to the $L^1$ norm, since
\begin{equation}
    \EE\Bigg[\Bigg|\frac{S_n(F_\mu^{[-1]}(v_j))-v_j}{\rho_\mu(F_\mu^{[-1]}(v_j))}\Bigg|\Bigg]\le K\EE\Big[\Big|\sum_{i=1}^n\frac{(Z_i-p)}{n}\Big|\Big],
\end{equation}
where $\{Z_i\}_{i\in [n]}$ is a group of i.i.d. Bernoulli random variables that are equal to $1$ with probability $p$ and equal to $0$ with probability $1-p$.
By Jensen's inequality, we get
\[
\EE\Big[\Big|\sum_{i=1}^n\frac{(Z_i-p)}{n}\Big|\Big]\le \frac{1}{n}\Big(\EE[\sum_{i=1}^n|Z_i-p|^2]\Big)^{\frac{1}{2}}\le \frac{1}{\sqrt{n}}Var(Z_1)
\]
and thus $\EE\Big[\Big|\frac{S_n(F_\mu^{[-1]}(v_j))-v_j}{\rho_\mu(F_\mu^{[-1]}(v_j))}\Big|\Big]\le O(n^{-\frac{1}{2}})$.
Through a more delicate argument, we are able to show that $\EE[|R_n|]\le O(n^{-\frac{3}{4}})$.
%
%
Notice that we have shown that $\EE[|\SCo_{\vec v}(\vec X_n)-W_1(\mu,\nu_Q)|]\le \EE[W_1(\mu_{\vec x},\mu)]+O(n^{-\frac{1}{2}})$.
Owing once again to the convergence results in \cite{bobkov2019one}, we conclude the proof.
We now prove that $R_n$ converges to $0$ with respect to the $L^1$ norm.
%
We already know that $R_n$ converges to $0$ almost surely.
If we show that $R_n$ is also uniformly integrable, \textit{i.e.}
\begin{equation}
    \label{eq:unif_int_rest}
    \lim_{a\to +\infty}\sup_n \int_{a}^{+\infty}|R_n|dx=0
\end{equation}
and that $\sup_n\EE[|R_n|]<+\infty$, we infer from the Vitali convergence theorem that $R_n$ converges with respect to the $L^1$ norm.
More precisely, we show that $R_n$ converges to zero with respect to the $L^1$ norm and $\EE[R_n]\le O(n^{-\frac{3}{4}})$, which proves that $\EE[W_1(\nu_Q,\gamma_n)]\le O(n^{-\frac{1}{2}})$, thus $\EE[W_1(\nu_Q,\gamma_n)]\to 0$.
%

%
%
From Bahadur's formula \eqref{eq:bahadur}, we have that
\[
R_n=X_{\floor{(n-1)v_j}+1}-F_\mu^{[-1]}(v_j)-\frac{S_n(F_\mu^{[-1]}(v_j))-v_j)}{\rho_\mu(F_\mu^{[-1]}(v_j))},
\]
which proves that $\sup_n\EE[|R_n|]<+\infty$, since $\sup_{n}\EE[|X_{\floor{(n-1)v_j}+1}|]<+\infty$\footnote{This follows from the fact that sample quantiles are asymptotically unbiased} and the other quantities are bounded random variables.
We now prove \eqref{eq:unif_int_rest}.
Since $\frac{S_n(F_\mu^{[-1]}(v_j))-v_j)}{\rho_\mu(F_\mu^{[-1]}(v_j))}$ and $F_\mu^{[-1]}(v_j)$ are both bounded, it suffices to show that the sequence $X_{\floor{(n-1)v_j}+1}$ is uniformly integrable.
It is well-known that the density of the $(\floor{(n-1)v_j}+1)$-th order statistic of an absolutely continuous random variable is given by the formula
\begin{equation}
    \label{eq:density_orderstatistics}
    \frac{n!}{(n-\floor{(n-1)v_j})!(\floor{(n-1)v_j})!}\rho_\mu(x)[F_\mu(x)]^{\floor{(n-1)v_j}}[1-F_\mu(x)]^{n-\floor{(n-1)v_j}}
\end{equation}
To prove that the sequence $X_{\floor{(n-1)v_j}+1}$ is uniformly integrable we then need to prove that
\[
\lim_{a\to +\infty}\sup_{n}\int_{a}^{+\infty}\frac{n!\;x\;\rho_\mu(x)[F_\mu(x)]^{\floor{(n-1)v_j}}[1-F_\mu(x)]^{n-\floor{(n-1)v_j}}}{(n-\floor{(n-1)v_j})!(\floor{(n-1)v_j})!}dx=0.
\]
Since $X$ has finite expected value, namely $m$, and every c.d.f. is increasing, we get
\begin{align}
\label{eq:gsfgfsgs}
  \nonumber  \int_{a}^{+\infty}&\frac{n!\;x\;\rho_\mu(x)[F_\mu(x)]^{\floor{(n-1)v_j}}[1-F_\mu(x)]^{n-\floor{(n-1)v_j}}}{(n-\floor{(n-1)v_j}-1)!(\floor{(n-1)v_j})!}dx\\
    &\le  \frac{n!\;m}{(n-\floor{(n-1)v_j}-1)!(\floor{(n-1)v_j})!}[1-F_\mu(a)]^{n-\floor{(n-1)v_j}}.
\end{align}
From the properties of the binomial coefficient, we have that
\[
\frac{n!}{(n-\floor{(n-1)v_j}-1)!(\floor{(n-1)v_j})!}[1-F_\mu(a)]^{n-\floor{(n-1)v_j}}\le \alpha_n [1-F_\mu(a)]^{(1-q)n},
\]
where $q\in (0,1)$ is such that $\max_{j\in [k]}v_j<q$ and
\begin{equation}
    \label{eq:dafsdf}
    \alpha_n=\begin{cases}
    \frac{n!}{(\frac{n}{2})!(\frac{n}{2})!} \quad \text{if} \quad n \;\text{is even},\\
    \frac{n!}{(\frac{n-1}{2})!(\frac{n+1}{2})!} \quad \text{otherwise.}
\end{cases}
\end{equation}
To conclude the proof, we show that there exists a value $a'$ such that for every $a>a'$ the sum of $\alpha_n(1-F_\mu(a))^{(1-q)n}$ converges; hence $\alpha_n(1-F_\mu(a))^{(1-q)n}$ is infinitesimal for every $a>a'$.
Indeed, using the ratio test criteria, we get that, for odd $n$
\begin{align*}
    \lim_{n\to \infty} \frac{(n+1)!}{(\frac{n+1}{2})!(\frac{n+1}{2})!}&\frac{(\frac{(n+1)}{2})!(\frac{(n-1)}{2})!}{n!}\frac{[1-F_\mu(a)]^{(1-q)(n+1)}}{[1-F_\mu(a)]^{(1-q)n}}=2[1-F_\mu(a)]^{(1-q)},
\end{align*}
thus, there exists a large enough value $a$ for which
\[
2[1-F_\mu(a)]^{(1-q)} < 1.
\]
Similarly, we show that also for even $n$ we infer the same conclusion, proving the uniformly integrability of $R_n$.
In particular, $R_n$ converges to zero with respect to the $L^1$ norm.
Moreover, notice that $\EE[R_n]\le O(n^{-\frac{3}{4}})$, which proves that  $\EE[W_1(\nu_Q,\gamma_n)]\le O(n^{-\frac{1}{2}})$, thus $\EE[W_1(\nu_Q,\gamma_n)]\to 0$.
A similar argument allows us to handle $\EE[W_1(\nu_Q,\gamma_n)]$ and conclude the proof.
\qed
\end{proof}

By combining the convergence results shown in Lemma \ref{lmm:lemma1} and \ref{lmm:lemma2}, we infer that the Bayesian approximation ratio of any $\PMp$ converges to a bounded quantity.

\begin{theorem}
\label{thm:limit}
Let $\vec X_n$ be a random vector of $n$ i.i.d. variables distributed as $\mu$ and let $\vec v\in (0,1)^k$ be a percentile vector.
If $\mu$ satisfies \eqref{eq:finite_moment}, we have
\[
\lim_{n\to \infty}\frac{\EE[\SCo_{\vec v}(\vec X_n)]}{\EE[\SCo_{opt}(\vec X_n)]}=\frac{W_1(\mu,\nu_{Q_{\vec v}})}{W_1(\mu,\nu^{(k)})}.
\]
\end{theorem}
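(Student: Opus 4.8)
The plan is to combine the two convergence statements already established in Lemma \ref{lmm:lemma1} and Lemma \ref{lmm:lemma2} and then invoke the elementary quotient rule for limits of real sequences. First I record that, since $\vec v\in(0,1)^k$ and $\mu$ satisfies \eqref{eq:finite_moment}, both lemmas apply: by Lemma \ref{lmm:lemma2} the numerator $\EE[\SCo_{\vec v}(\vec X_n)]$ converges to $W_1(\mu,\nu_{Q_{\vec v}})$ as $n\to\infty$, while by Lemma \ref{lmm:lemma1} the denominator $\EE[\SCo_{opt}(\vec X_n)]$ converges to $W_1(\mu,\nu^{(k)})$. Both limits are finite, as the finite-first-moment assumption guarantees that all the $W_1$ distances appearing here are finite.

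The only point requiring care is that the limit of a quotient equals the quotient of the limits only when the limiting denominator does not vanish. For this I would rely on the strict positivity of $W_1(\mu,\nu^{(k)})$, which is already part of the conclusion of Lemma \ref{lmm:lemma1}. The underlying reason is that $\mu$ is absolutely continuous, so its support is an interval containing infinitely many points, whereas every measure in $\PP_k(\erre)$ -- and in particular the projection $\nu^{(k)}$ -- is supported on at most $k$ points; because $W_1$ is a genuine metric, two such measures cannot coincide, forcing $W_1(\mu,\nu^{(k)})>0$.

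With both sequences convergent and the limiting denominator strictly positive, the standard algebra of limits yields
\[
\lim_{n\to \infty}\frac{\EE[\SCo_{\vec v}(\vec X_n)]}{\EE[\SCo_{opt}(\vec X_n)]}=\frac{\lim_{n\to\infty}\EE[\SCo_{\vec v}(\vec X_n)]}{\lim_{n\to\infty}\EE[\SCo_{opt}(\vec X_n)]}=\frac{W_1(\mu,\nu_{Q_{\vec v}})}{W_1(\mu,\nu^{(k)})},
\]
which is precisely the claimed identity. I do not anticipate any genuine obstacle at this stage: the substantive analytic work -- the $L^1$ convergence of the empirical measure $\mu_{\vec x}$ to $\mu$ and the convergence of the relevant order statistics to the quantiles $F_\mu^{[-1]}(v_j)$ via Bahadur's representation -- has already been discharged inside Lemmas \ref{lmm:lemma1} and \ref{lmm:lemma2}. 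The proof of the theorem is therefore purely a matter of assembling those two limits and verifying the non-degeneracy of the denominator.
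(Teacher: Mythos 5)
Your proposal is correct and follows essentially the same route as the paper's own proof: invoke Lemma \ref{lmm:lemma1} and Lemma \ref{lmm:lemma2} for the two limits, note that $W_1(\mu,\nu^{(k)})>0$ because the absolutely continuous $\mu$ cannot coincide with a $k$-point measure, and conclude by the quotient rule for limits. No gaps.
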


\begin{proof}
    From Lemma \ref{lmm:lemma1}, we have that $\EE[\SCo_{opt}(\vec X_n)]$ converges to $W_1(\mu,\nu^{(k)})$, where $\nu^{(k)}$ is the solution to problem \eqref{eq:projection_real}.
    Since $\mu$ is absolutely continuous and $\nu^{(k)}$ is a discrete measure, we have that $W_1(\mu,\nu^{(k)})>0$.
    From Lemma \ref{lmm:lemma2}, have that $\EE[\SCo_{\vec v}(\vec X_n)]$ converges to $W_1(\mu,\nu_{Q_{\vec v}})$, where $\nu_{Q_{\vec v}}$ is defined as in \eqref{eq:nu_qp}.
    Finally, since both quantities are well-defined and finite, we have
    \[
    \lim_{n\to \infty}\frac{\EE[\SCo_{\vec v}(\vec X_n)]}{\EE[\SCo_{opt}(\vec X_n)]}=\frac{W_1(\mu,\nu_{Q_{\vec v}})}{W_1(\mu,\nu^{(k)})}<+\infty,
    \]
    which concludes the proof.
\qed\end{proof}

Theorem \ref{thm:limit} ensures that the limit of the Bayesian approximation ratio of any percentile mechanisms is equal to a quantity that depends only on $\mu$, $k$, and $\vec v$.
For an illustration, we compute this quantity for a generic $k$ and $\vec v$, and $\mu$ is the uniform distribution over $[0,1]$.
\begin{example}
    \label{example1}
    Let $\vec v =(v_1,\dots,v_k)$ be a percentile vector and let the underlying distribution $\mu$ be the uniform distribution over $[0,1]$.
    The measure $\nu_{Q_{\vec v}}$ is then defined as $\nu_{Q_{\vec v}}:=\sum_{i=1}^k \frac{v_{i+1}-v_{i-1}}{2}\delta_{v_i}$, where $v_0=0$ and $v_{k+1}=1$.
    It is easy to see that the projection of $\mu$ over $\PP_k(\erre)$ is $\nu^{(k)}:=\frac{1}{k}\sum_{j=1}^k\delta_{\frac{2j-1}{2k}}$.
    From a simple computation, we infer that $W_1(\mu,\nu_{Q_{\vec v}})=\sum_{i=1}^k\Big[ \frac{(v_{i+1}-v_i)^2+(v_i-v_{i-1})^2}{2} \Big]$
    and $W_1(\mu,\nu^{(k)})=\frac{1}{4k}$.
    Moreover, since $v_i\le v_{i+1}$ and $v_j\in[0,1]$, we have that $(v_{i+1}-v_i)^2\le v_{i+1}-v_i$, and obtain 
    \[
    \lim_{n\to\infty}B^{(n)}_{ar}(\PMp)\le 4k\sum_{i=1}^k\Big[ \frac{(v_{i+1}-v_{i-1})}{2} \Big]
    \le 4k.
    \]
    That is, when the agents are distributed according to an uniform distribution,  the Bayesian approximation ratio of any percentile mechanism for the $k$-FLP is upper bounded by $4k$.
\end{example}
We now characterize the convergence rate of the Bayesian approximation ratio.
To do so, $\mu$ must have compact support or there exists $\delta>0$ such that
\begin{equation}
\label{eq:2_delta_mom}
    \int_{\erre}|x|^{2+\delta}d\mu <+\infty.
\end{equation}
In both cases, we have that the convergence rate is at most of the order $n^{-\frac{1}{2}}$.
\begin{theorem} 
\label{thm:conv_rate}
    Under the hypothesis of Theorem \ref{thm:limit}, let us further assume that either $\mu$ is supported on a compact set or satisfies \eqref{eq:2_delta_mom}.
    Then, we have that
    \begin{equation}
        \Bigg| \frac{\EE[\SCo_{\vec v}(\vec X_n)]}{\EE[\SCo_{opt}(\vec X_n)]}-\frac{W_1(\mu,\nu_{Q_{\vec v}})}{W_1(\mu,\nu^{(k)})} \Bigg| \le O(n^{-\frac{1}{2}}).
    \end{equation}
    Thus the convergence rate of the Bayesian approximation ratio of $\PMp$ is $O(n^{-\frac{1}{2}})$.
    Moreover, for every $\vec v\in(0,1)^k$, there exists $C>0$ such that
    \begin{equation}
    \label{eq:Barbound}
        B_{ar}^{(n)}(\PMp)\le \frac{W_1(\mu,\nu_{Q_{\vec v}})}{W_1(\mu,\nu^{(k)})} +\frac{C}{\sqrt{n}}\quad\quad \forall n>k,
    \end{equation}
    where $\nu_{Q_{\vec v}}$ is defined in \eqref{eq:nu_qp} and $\nu^{(k)}$ is a minimizer of \eqref{eq:projection_real}.
\end{theorem}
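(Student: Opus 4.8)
The plan is to read off the two quantitative estimates already contained in the proofs of Lemma~\ref{lmm:lemma1} and Lemma~\ref{lmm:lemma2} and to combine them with a sharp rate for the empirical Wasserstein error $\EE[W_1(\mu,\mu_{\vec x})]$. Set
\[
a_n := \EE[\SCo_{\vec v}(\vec X_n)], \quad b_n := \EE[\SCo_{opt}(\vec X_n)], \quad a := W_1(\mu,\nu_{Q_{\vec v}}), \quad b := W_1(\mu,\nu^{(k)}).
\]
The proof of Lemma~\ref{lmm:lemma1} shows $|b_n - b| \le \EE[W_1(\mu,\mu_{\vec x})]$, while the proof of Lemma~\ref{lmm:lemma2} shows $|a_n - a| \le \EE[W_1(\mu,\mu_{\vec x})] + O(n^{-1/2})$, the extra $O(n^{-1/2})$ term coming from the Bahadur remainder and the Bernoulli fluctuation already bounded there. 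Thus the whole statement reduces to proving that $\EE[W_1(\mu,\mu_{\vec x})] = O(n^{-1/2})$ under the extra hypotheses, followed by an elementary estimate on the ratio $a_n/b_n$.

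The crux is the rate for $\EE[W_1(\mu,\mu_{\vec x})]$. I would appeal to the sharp one-dimensional estimates of \cite{bobkov2019one}, which control this expectation (up to universal constants) by $\tfrac{1}{\sqrt n}\int_{\erre}\sqrt{F_\mu(x)(1-F_\mu(x))}\,dx$. The role of the assumptions is precisely to make this integral finite: if $\mu$ is compactly supported the integrand has compact support, while if $\mu$ satisfies \eqref{eq:2_delta_mom} then Markov's inequality gives $1-F_\mu(x)\le \EE[|X|^{2+\delta}]/|x|^{2+\delta}$, so $\sqrt{F_\mu(1-F_\mu)}$ decays like $|x|^{-1-\delta/2}$ and is integrable at infinity. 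I would stress here that a mere finite second moment would only give $|x|^{-1}$ decay, which is \emph{not} integrable, so $2+\delta$ is exactly the threshold needed. In either case $\EE[W_1(\mu,\mu_{\vec x})] \le C_0\,n^{-1/2}$ for a constant $C_0$ depending only on $\mu$, and combining this with the two displayed bounds yields $|a_n-a|=O(n^{-1/2})$ and $|b_n-b|=O(n^{-1/2})$.

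Finally I would convert these into the ratio bound via
\[
\Bigl|\frac{a_n}{b_n}-\frac{a}{b}\Bigr| = \frac{|b(a_n-a)+a(b-b_n)|}{b_n\,b} \le \frac{b\,|a_n-a|+a\,|b_n-b|}{b_n\,b}.
\]
The numerator is $O(n^{-1/2})$; since Lemma~\ref{lmm:lemma1} gives $b>0$ and $b_n\to b$, for large $n$ the denominator is bounded below by $b^2/2>0$, which produces the first displayed inequality of the theorem. To upgrade this to the uniform bound \eqref{eq:Barbound} valid for all $n>k$, I would note that $\EE[\SCo_{opt}(\vec X_n)]>0$ whenever $n>k$, because almost surely the $n>k$ distinct reports cannot be covered by $k$ points at zero cost as $\mu$ is absolutely continuous, and that $a_n<+\infty$ since $\mu$ has a finite first moment; hence $B_{ar}^{(n)}(\PMp)$ is a finite positive number for each such $n$. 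The asymptotic estimate supplies $N$ and $C$ with $B_{ar}^{(n)}(\PMp)\le \tfrac{a}{b}+C n^{-1/2}$ for $n\ge N$, and enlarging $C$ to dominate the finitely many values $k<n<N$ secures \eqref{eq:Barbound}. The main obstacle is the middle step: making the $n^{-1/2}$ rate of the empirical $W_1$ distance rigorous and verifying that \eqref{eq:2_delta_mom}, rather than a bare finite variance, is what guarantees finiteness of the governing integral.
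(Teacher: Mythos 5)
Your proposal is correct and follows essentially the same route as the paper: reduce to the two estimates $|\EE[\SCo_{opt}(\vec X_n)]-W_1(\mu,\nu^{(k)})|\le O(n^{-1/2})$ and $|\EE[\SCo_{\vec v}(\vec X_n)]-W_1(\mu,\nu_{Q_{\vec v}})|\le O(n^{-1/2})$ via the Bobkov--Ledoux rate for $\EE[W_1(\mu,\mu_{\vec x})]$ and the Bahadur remainder from Lemma~\ref{lmm:lemma2}, then conclude by the elementary ratio decomposition. Your explicit check that compact support or condition \eqref{eq:2_delta_mom} makes the integral $\int\sqrt{F_\mu(1-F_\mu)}$ finite, and your enlargement of $C$ to cover the finitely many $k<n<N$, are details the paper leaves implicit but are welcome additions.
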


\begin{proof}
    It suffice to show that both $\big|\EE[\SCo_{opt}(\vec X_n)]-W_1(\mu,\nu^{(k)})\big|\le O(n^{-\frac{1}{2}})$ and $\big|\EE[\SCo_{\vec v}(\vec X_n)]-W_1(\mu,\nu_{Q_{\vec v}})\big|\le O(n^{-\frac{1}{2}})$ hold.
    Indeed, in this case, we have that
    \begin{align*}
        \Bigg|\frac{\EE[\SCo_{\vec v}(\vec X_n)]}{\EE[\SCo_{opt}(\vec X_n)]}&-\frac{W_1(\mu,\nu_{Q_{\vec v}})}{W_1(\mu,\nu^{(k)})}\Bigg|\\
        &=\Bigg|\frac{\EE[\SCo_{\vec v}(\vec X_n)]W_1(\mu,\nu^{(k)})-W_1(\mu,\nu_{Q_{\vec v}})\EE[\SCo_{opt}(\vec X_n)]}{\EE[\SCo_{opt}(\vec X_n)]W_1(\mu,\nu^{(k)})}\Bigg|\\
        &\le \Bigg|\frac{\EE[\SCo_{\vec v}(\vec X_n)]W_1(\mu,\nu^{(k)})-W_1(\mu,\nu_{Q_{\vec v}})W_1(\mu,\nu^{(k)})]}{\EE[\SCo_{opt}(\vec X_n)]W_1(\mu,\nu^{(k)})}\Bigg| \\
        &\quad+\Bigg|\frac{W_1(\mu,\nu_{Q_{\vec v}})W_1(\mu,\nu^{(k)})-W_1(\mu,\nu_{Q_{\vec v}})\EE[\SCo_{opt}(\vec X_n)]}{\EE[\SCo_{opt}(\vec X_n)]W_1(\mu,\nu^{(k)})}\Bigg|\\
        &\le \frac{\Big|\EE[\SCo_{\vec v}(\vec X_n)]-W_1(\mu,\nu_{Q_{\vec v}})\Big|}{\EE[\SCo_{opt}(\vec X_n)]}\\
        &\quad+\frac{W_1(\mu,\nu_{Q_{\vec v}})}{\EE[\SCo_{opt}(\vec X_n)]}\frac{\big|\EE[\SCo_{\vec v}(\vec X_n)]-W_1(\mu,\nu_{Q_{\vec v}})\big|}{W_1(\mu,\nu^{(k)})}\\
        &\le O(n^{-\frac{1}{2}}),
    \end{align*}
which concludes the proof.
The inequality $\big|\EE[\SCo_{\vec v}(\vec X_n)]-W_1(\mu,\nu^{(k)})\big|\le O(n^{-\frac{1}{2}})$ follows from the result in \cite{bobkov2019one}, since
\[
\big|\EE[\SCo_{\vec v}(\vec X_n)]-W_1(\mu,\nu^{(k)})\big|\le \EE[|\SCo_{\vec v}(\vec X_n)-W_1(\mu,\nu^{(k)})|]\le \EE[W_1(\mu_{\vec x},\mu)]\le O(n^{-\frac{1}{2}}).
\]
The identity $\EE\Big[\big|\SCo_{\vec v}(\vec X_n)-W_1(\mu,\nu_{Q_{\vec v}})\big|\Big]\le O(n^{-\frac{1}{2}})$ has been partially shown in the proof of Lemma \ref{lmm:lemma2}.
Indeed, we have shown that
\[
\Big|\EE[\SCo_{\vec v}(\vec x)]-W_1(\mu,\nu_{Q_{\vec v}})\Big| \le \EE\Big[\big|\SCo_{\vec v}(\vec X_n)-W_1(\mu,\nu_{Q_{\vec v}})\big|\Big]\le \EE[W_1(\mu_{\vec x},\mu)]+O(n^{-\frac{1}{2}}),
\]
which, in conjunction with the estimate $\EE[W_1(\mu_{\vec x},\mu)]\le O(n^{-\frac{1}{2}})$.

Lastly, we notice that, when $n>k$, we have that $\EE[SC_{opt}(\vec X_n)]>0$, hence $B_{ar}^{(n)}(\PMp)$ is well-defined.
By definition of $O(n^{-\frac{1}{2}})$, there exists a constant $C>0$ such that
\[
    \Big|B_{ar}^{(n)}(\PMp)-\frac{W_1(\mu,\nu_{Q_{\vec v}})}{W_1(\mu,\nu^{(k)})}\Big|\le \frac{C}{\sqrt{n}},
\]
thus
\[
B_{ar}^{(n)}(\PMp)\le \frac{W_1(\mu,\nu_{Q_{\vec v}})}{W_1(\mu,\nu^{(k)})} + \frac{C}{\sqrt{n}},
\]
which concludes the proof.
\qed\end{proof}

Notice that the term $\frac{W_1(\mu,\nu_{Q_{\vec v}})}{W_1(\mu,\nu^{(k)})}$ in \eqref{eq:Barbound}, is a constant that does not depend on $n$, but depends only on the specifics of the problem, that is $\mu$, $k$, and $\vec v$.

\begin{remark}
To conclude the section, we discuss the case in which $v_j\in\{0,1\}$ for at least one index $j\in [m]$.
For the sake of argument, let us consider a percentile mechanism induced by a percentile vector $\vec v$ such that $v_1=0$.
In this case, the mechanism places a facility at the position of the leftmost agent.
The asymptotic behaviour of the mechanism then depends on whether the support of $\mu$ is bounded from left or not.
Indeed, if $-\infty<a:=\inf_{x\in spt(\mu)}x$, we have that the position of the leftmost agent converges to $a$.
In this case, we can study the limit Bayesian approximation ratio of the percentile mechanism, but we will not be able to retrieve any guarantee on the convergence speed.
If $\inf_{x\in spt(\mu)}x=-\infty$, the position of the leftmost agent does not converge, thus we cannot adapt Theorem \ref{thm:limit} to suit this case.
%
%
\end{remark}

\section{The Optimal Percentile Mechanism}

Owing to Theorem \ref{thm:limit}, if $W_1(\mu,\nu_{Q_{\vec v}})=W_1(\mu,\nu^{(k)})$ the Bayesian approximation ratio of $\PMp$ converges to $1$ when $n\to \infty$.
We now show that, for any $k\in \mathbb{N}$ and any underlying distribution $\mu$, there exists a percentile vector whose associated mechanism asymptotically behaves optimally, \textit{i.e.} the limit of the Bayesian approximation ratio of the induced mechanism is equal to $1$.
Given an underlying distribution $\mu$, we denote with $\vec v_\mu$ its related optimal percentile vector.

\begin{theorem}
\label{crll:opt_vec}
Let $\mu$ be the underlying distribution and $\{y_j\}_{j\in [k]}$ be the support of the solution to problem \eqref{eq:projection_real}.
Then, the vector $\vec v_\mu$ defined as 
\begin{equation}
\label{eq:opt_per_vector_2}
    (v_\mu)_j=F_\mu(y_j), 
\end{equation}
is an optimal percentile vector.
%
\end{theorem}

\begin{proof}
By Lemma \ref{lmm:lemma2}, we have that
\[
\EE[\SCo_{\vec v_\mu}(\vec X_n)]\to W_1(\mu,\nu_{Q_{\vec v_\mu}})
\]
when $n \to \infty$.
We observe that $spt(\nu_{Q_{\vec v_\mu}})=spt(\nu^{(k)})$.
By definition of $\nu^{(k)}$, we have that
\[
W_1(\mu,\nu^{(k)})\le W_1(\mu,\nu_{Q_{\vec v_\mu}}).
\]
Furthermore, due to equation \eqref{eq:nuQclsoest}, we have
\[
W_1(\mu,\nu_{Q_{\vec v_\mu}})=\int_{-\infty}^{+\infty}\min_{j\in[k]}|x-y_j|d\mu=\min_{\{\lambda_j\}_{j\in [k]}}W_1(\lambda,\mu)\le W_1(\nu^{(k)},\mu),
\]
where $\lambda=\sum_{j=1}^k\lambda_j\delta_{y_j}$ and $y_j$ are the points in the support of both $\nu^{(k)}$ and $\nu_{Q_{\vec v_\mu}}$.
We then infer that 
\begin{equation}
    \label{eq:dimadditional}
    W_1(\mu,\nu^{(k)})=W_1(\mu,\nu_{Q_{\vec v_\mu}}),
\end{equation}
which concludes the proof.
\qed\end{proof}

\begin{remark}
    %
%
It is also worth of notice that $\nu_{Q_{\vec v_\mu}}=\nu^{(k)}$ holds.
Indeed, toward a contradiction, let us assume that $\nu_{Q_{\vec v_\mu}}\neq \nu^{(k)}$.
Then there exists a $\bar j\in[k]$ such that $(\nu_{Q_{\vec v_\mu}})_i=\nu^{(k)}_i$ for every $i=1,\dots,\bar j-1$ and $(\nu_{Q_{\vec v_\mu}})_{\bar j} \neq \nu^{(k)}_{\bar j}$.
Since the optimal transportation plan between two measures supported over a line is monotone, we have that $W_1(\mu,\nu^{(k)})=\sum_{j=0}^{k}\int_{l_{j}}^{l_{j+1}}|x-y_j|d\mu$,
where $l_0=-\infty$ and $l_r=F_\mu^{[-1]}(\sum_{i=1}^r\nu^{(k)}_i)$ for every $r\in[k]$.
Since $(\nu_{Q_{\vec v_\mu}})_{\bar j} \neq \nu^{(k)}_{\bar j}$, we have that $l_{\bar j}\neq \frac{y_{\bar j}+y_{\bar j +1}}{2}$.
Thus we have $W_1(\mu,\nu^{(k)})\neq \int_{-\infty}^{+\infty}\min_{j\in[k]}|x-y_j|d\mu$, which contradicts the definition of $\nu^{(k)}$ and \eqref{eq:nuQclsoest}, thus $\nu_{Q_{\vec v_\mu}}=\nu^{(k)}$.
\end{remark}

Given $k\in\enne$ and a probability measure $\mu$, it is possible to retrieve a system of $k$ equations that characterizes the optimal percentile mechanism. 
Indeed, let us denote with $y_1,\dots,y_k$ the support of the solution to $\min_{\lambda\in\PP_k(\erre)}W_1(\mu,\lambda)$ and let $z_i=\frac{y_i+y_{i+1}}{2}$ for $i=1,\dots,k-1$, $z_0=-\infty$, and $z_k=+\infty$.
%
%
Since every agent's cost is defined by its distance to the closest facility, we know that every agent in $(z_0,z_1)$ will access the facility located in $y_1$.
Due to the optimality of the solution, we infer that $y_1$ is locally optimal over the set $(z_0,z_1)$.
Otherwise, we could reduce the cost of the solution by replacing $y_1$ with the optimal facility location for the problem restricted to $(z_0,z_1)$.
Since we are considering the Social Cost, the local optimality of $y_1$ is expressed by the identity $2(F_\mu(y_1)-F_\mu(z_0))=F(z_1)-F_\mu(z_0)$, since $y_1$ has to be the median of $\mu$ when the measure is restricted to $(z_0,z_1)$.
%

%

\begin{theorem}
\label{thm:system_charact}
    Given $k\in\enne$ and $\mu\in\PP(\erre)$, let $\nu$ be a solution to Problem \eqref{eq:projection_real}.
    Then the optimal percentile vector is $\vec v_\mu=(F_\mu(y_1),\dots,F_\mu(y_m))\in(0,1)^k$, where $y_1\le y_2\le \dots\le y_k$ satisfy the following system of $k$ equations
\begin{equation}
\label{eq:system_gen2_app}
    \begin{cases}
        2F_\mu(y_1)=F_\mu\Big(\frac{y_1+y_2}{2}\Big)\\
        2\bigg(F_\mu(y_{j-1})-F_\mu\Big(\frac{y_{j-2}+y_{j-1}}{2}\Big)\bigg)=F_\mu\Big(\frac{y_{j}+y_{j-1}}{2}\Big)-F_\mu\Big(\frac{y_{j-1}+y_{j-2}}{2}\Big),\;\; j\in[k-1].\\
        2\bigg(F_\mu(y_k)-F_\mu\Big(\frac{y_k+y_{k-1}}{2}\Big)\bigg)=1-F_\mu\Big(\frac{y_k+y_{k-1}}{2}\Big)\\
    \end{cases}
\end{equation}
\end{theorem}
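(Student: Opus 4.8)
The plan is to reduce the system \eqref{eq:system_gen2_app} to a collection of one-dimensional median conditions, one per facility, and then rewrite each such condition in terms of the cumulative distribution function $F_\mu$. By Theorem \ref{crll:opt_vec}, the optimal percentile vector is $\vec v_\mu=(F_\mu(y_1),\dots,F_\mu(y_k))$, where $\{y_j\}_{j\in[k]}$ is the support of a solution $\nu$ to the projection problem \eqref{eq:projection_real}; hence it only remains to show that these support points satisfy \eqref{eq:system_gen2_app}. Since $\nu$ assigns every point of $\mu$ to its closest facility, equation \eqref{eq:nuQclsoest} tells us that facility $y_j$ serves exactly the interval $(z_{j-1},z_j)$, where $z_j=\frac{y_j+y_{j+1}}{2}$ for $j\in[k-1]$, $z_0=-\infty$, and $z_k=+\infty$. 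The theorem therefore amounts to proving that each $y_j$ is the median of $\mu$ restricted to its own serving interval $(z_{j-1},z_j)$.

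The main engine is a local-optimality argument. First I would fix the midpoint partition $\{(z_{j-1},z_j)\}_{j\in[k]}$; by definition of the midpoints this is precisely the closest-facility partition of the configuration $\{y_j\}$, so by Theorem \ref{thm:SocialCost_projection} the optimal cost equals $\sum_{j\in[k]}\int_{z_{j-1}}^{z_j}|x-y_j|\,d\mu$. Suppose toward a contradiction that some $y_{j_0}$ is not the median $m_{j_0}$ of $\mu$ restricted to $(z_{j_0-1},z_{j_0})$. Replacing $y_{j_0}$ with $m_{j_0}$ while keeping the partition frozen strictly decreases the $j_0$-th integral (the median is the unique minimizer of $\int_{z_{j_0-1}}^{z_{j_0}}|x-\cdot|\,d\mu$, uniqueness following from the strict positivity of $\rho_\mu$), hence strictly decreases the total cost evaluated on the fixed partition. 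Since the closest-facility cost of the perturbed configuration is bounded above by its cost on any fixed assignment, the perturbed configuration attains a strictly smaller value of \eqref{eq:projection_real} than $\nu$, contradicting the optimality of $\nu$. This forces every $y_j$ to be the median of $\mu$ restricted to $(z_{j-1},z_j)$.

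It then remains to translate the median condition into $F_\mu$. Under the basic assumptions on $\mu$, the median $y_j$ of $\mu$ restricted to $(z_{j-1},z_j)$ is characterized by the equality of left and right masses, namely $F_\mu(y_j)-F_\mu(z_{j-1})=F_\mu(z_j)-F_\mu(y_j)$, equivalently $2\big(F_\mu(y_j)-F_\mu(z_{j-1})\big)=F_\mu(z_j)-F_\mu(z_{j-1})$. Substituting $F_\mu(z_0)=0$ into the $j=1$ instance yields the first equation of \eqref{eq:system_gen2_app}; substituting $F_\mu(z_k)=1$ into the $j=k$ instance yields the last equation; and the intermediate facilities, after expanding $z_{j-1}$ and $z_j$ into midpoints and matching index conventions, yield the middle family. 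Together with $\vec v_\mu=(F_\mu(y_1),\dots,F_\mu(y_k))$ from Theorem \ref{crll:opt_vec}, this establishes the claim.

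The hard part will be the local-optimality step, and specifically avoiding the temptation to freeze a single facility and differentiate: perturbing $y_{j_0}$ moves the midpoint boundaries $z_{j_0-1}$ and $z_{j_0}$, so the closest-facility partition is \emph{not} fixed along the perturbation. The device that sidesteps this is to keep the original midpoint partition fixed as an upper-bounding assignment, so that any cost reduction under the frozen partition automatically upper-bounds the genuinely optimal closest-facility cost of the perturbed configuration, delivering the contradiction without tracking how the boundaries move. A minor technical point to address is that each serving interval carries positive $\mu$-mass and that the support points $y_j$ are distinct, both of which follow from the regularity assumptions on $\mu$ and guarantee that the restricted medians are well defined and unique.
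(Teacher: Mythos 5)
Your proposal is correct and follows essentially the same route as the paper: derive the system by showing each support point $y_j$ of the $W_1$-projection must be the median of $\mu$ restricted to its midpoint cell $(z_{j-1},z_j)$, then rewrite the median conditions via $F_\mu$ and invoke Theorem \ref{crll:opt_vec} for the percentile vector itself. Your write-up is in fact more careful than the paper's (which only treats $y_1$ and asserts local optimality without justification): the ``freeze the midpoint partition as an upper-bounding assignment'' device is exactly the missing detail that makes the perturbation argument rigorous despite the cell boundaries moving with $y_{j_0}$.
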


\begin{proof}
    Given $k\in\enne$ and a probability measure $\mu$,  let us denote with $y_1,\dots,y_k$ the support of the solution to $\min_{\lambda\in\PP_k(\erre)}W_1(\mu,\lambda)$ and let $z_i=\frac{y_i+y_{i+1}}{2}$ for $i=1,\dots,k-1$, $z_0=-\infty$, and $z_k=+\infty$.
    Let us now focus on $y_1$.
    Since every agent's cost is defined by its distance to the closest facility, we know that every agent in $(z_0,z_1)$ will access the facility located in $y_1$.
    Otherwise, we could reduce the cost of the solution by assigning an agent to a closer facility.
    Moreover, due to the optimality of the solution, we infer that $y_1$ is locally optimal over the set $(z_0,z_1)$, thus $y_1$ is the median of $\mu$ restricted to $(z_0,z_1)$.
    Otherwise, we could reduce the cost of the solution by replacing $y_1$ with the optimal facility location for the problem restricted to $(z_0,z_1)$.
    Therefore, we have that $2(F_\mu(y_1)-F_\mu(z_0))=F(z_1)-F_\mu(z_0)$, since $y_1$ has to be the median of $\mu$ when the measure is restricted to $(z_0,z_1)$.
\qed\end{proof}

Since the projection problem \eqref{eq:projection_real} admits a solution, the system \eqref{eq:system_gen2_app} always admits at least a solution. 
In Appendix \ref{app:C}, we use Theorem \ref{thm:system_charact} to compute the optimal percentile mechanism for the uniform, normal, and exponential distributions.
In Table \ref{tab:5_intro}, we summarize our finding.

\begin{table}[t!]
\centering
\begin{tabular}{l@{\hskip 0.25in}  c@{\hskip 0.5in} c@{\hskip 0.5in} c}
\toprule
      &  {$k=1$} & {$k=2$} & {$k=3$}   \\
      \midrule
      {$\mathcal{N}$} & $(0.5)$ & $(0.25,0.75)$ & $(0.15,0.5,0.85)$ \\
      {$\mathcal{E}$} & $(0.5)$ & $(0.33,0.83)$ & $(0.25,0.67,0.92)$ \\
      {$\mathcal{U}$}  & $(0.5)$ & $(0.25,0.75)$ & $(0.16,0.5,0.83)$ \\
\bottomrule
\end{tabular}
%
  \caption{\small The asymptotically optimal percentile vectors for the Normal ($\mathcal{N}$), Exponential ($\mathcal{E}$), and Uniform distribution ($\mathcal{U}$). Every row contains the optimal percentile vectors of a distribution for $1$, $2$, and $3$ facilities and with respects to the Social Cost. \label{tab:5_intro} } 
\end{table}
%

%
%
\begin{theorem}
\label{thm:grs}
 Given a probability distribution $\mu$ such that condition \eqref{eq:finite_moment} is satisfied, let $\vec v_\mu$ be the optimal percentile vector associated with $\mu$.
 Then, there exists a constant $C$ such that, for every $n>k$, we have  $B_{ar}^{(n)}(\mathcal{PM}_{\vec v_{\mu}})\le 1+\frac{C}{\sqrt{n}}$.
\end{theorem}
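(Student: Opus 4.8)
The plan is to read the statement off as an immediate specialization of Theorem~\ref{thm:conv_rate} to the particular percentile vector $\vec v_\mu$, combined with the defining optimality property established in Theorem~\ref{crll:opt_vec}. First I would invoke the bound \eqref{eq:Barbound} with the choice $\vec v = \vec v_\mu$: this produces a constant $C>0$ such that, for every $n>k$,
\[
B_{ar}^{(n)}(\mathcal{PM}_{\vec v_\mu}) \le \frac{W_1(\mu,\nu_{Q_{\vec v_\mu}})}{W_1(\mu,\nu^{(k)})} + \frac{C}{\sqrt{n}}.
\]
The positivity of the denominator for $n>k$, needed for $B_{ar}^{(n)}$ to be well-defined, is already supplied inside Theorem~\ref{thm:conv_rate}. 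The entire content of the present theorem then reduces to showing that the leading constant on the right-hand side is exactly $1$.

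Next I would appeal to the identity \eqref{eq:dimadditional} from Theorem~\ref{crll:opt_vec}, which asserts precisely that $W_1(\mu,\nu^{(k)}) = W_1(\mu,\nu_{Q_{\vec v_\mu}})$ for the optimal percentile vector. Substituting this equality collapses the ratio to $1$ and yields $B_{ar}^{(n)}(\mathcal{PM}_{\vec v_\mu}) \le 1 + \tfrac{C}{\sqrt{n}}$, which is the claim. Once these two ingredients are in place no further estimation is required; the theorem is essentially a one-line corollary.

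The single point that needs care — and the main obstacle — is the mismatch between the hypotheses. The bound \eqref{eq:Barbound} on which I rely is proved under the stronger assumption that $\mu$ either has compact support or satisfies the $(2+\delta)$-moment condition \eqref{eq:2_delta_mom}, whereas the present statement only posits the finite first moment \eqref{eq:finite_moment}. To reconcile this I would either (i) read $C$ as inherited verbatim from \eqref{eq:Barbound}, tacitly working in the regime covered by Theorem~\ref{thm:conv_rate} where the $O(n^{-1/2})$ rate genuinely exploits the faster decay of $\EE[W_1(\mu_{\vec x},\mu)]$ guaranteed by those moment conditions in \cite{bobkov2019one}; or (ii) note that under \eqref{eq:finite_moment} alone one still obtains $B_{ar}^{(n)}(\mathcal{PM}_{\vec v_\mu}) \le 1 + o(1)$ from the limit result of Theorem~\ref{thm:limit} together with \eqref{eq:dimadditional}, with the explicit $n^{-1/2}$ rate reinstated precisely when the stronger integrability is assumed. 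I expect the intended proof to be the clean substitution of the optimality identity into \eqref{eq:Barbound}, so the crux lies entirely in observing that $\vec v_\mu$ makes the limiting ratio equal to $1$.
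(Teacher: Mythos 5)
Your proposal matches the paper's own proof: it likewise invokes the bound \eqref{eq:Barbound} of Theorem~\ref{thm:conv_rate} with $\vec v = \vec v_\mu$ and then uses the optimality identity $W_1(\mu,\nu^{(k)}) = W_1(\mu,\nu_{Q_{\vec v_\mu}})$ from Theorem~\ref{crll:opt_vec} to collapse the leading constant to $1$. The hypothesis mismatch you flag (finite first moment here versus compact support or the $(2+\delta)$-moment condition in Theorem~\ref{thm:conv_rate}) is real, and the paper's proof silently inherits the same gap, so your explicit acknowledgment of it is if anything more careful than the original.
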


\begin{proof}
    From Theorem \ref{thm:system_charact}, we have that the optimal percentile vector $\vec v_{\mu}$ belongs to $(0,1)^k$.
    In particular, we have that $\vec v_{\mu}$ meets the conditions of Theorem \ref{thm:limit} and \ref{thm:conv_rate}, hence there exists a constant $C>0$ such that
    \[
        B_{ar}^{(n)}(\mathcal{PM}_{\vec v_{\mu}})\le \lim_{n\to\infty}B_{ar}^{(n)}(\mathcal{PM}_{\vec v_{\mu}})+\frac{C}{\sqrt{n}}.
    \]
    To conclude the proof, it suffices to notice that, by definition of optimal percentile vector, we have that $\lim_{n\to\infty}B_{ar}^{(n)}(\mathcal{PM}_{\vec v_{\mu}})=1$.
\qed\end{proof}

%
To conclude, we show that the limit of the Bayesian approximation ratio of the percentile mechanisms and the percentile vector $\vec v_\mu$ defined in \eqref{eq:opt_per_vector_2} are immune to scale changes.
This is particularly useful when the mechanism designer only knows the class of distribution to which the agents' distribution belongs.
For example, the designer might know that the agents' type follows a Gaussian distribution but is unaware of its mean and/or its standard deviation.
In the following, we show that the optimal percentile and the limit of the Bayesian approximation ratio of the percentile mechanisms are the same regardless of the mean or standard deviation of the distribution.

\begin{theorem}
\label{thm:scale_invariance}
Let $X$ be the random variable that describes the agents' type distribution.
If $\vec v_\mu$ is the optimal percentile vector associated with $X$, then $\vec v_\mu$ is also the optimal percentile vector for any random variable of the form $X':=\sigma X + m$, where $m\in \erre$ and $\sigma> 0$.
%
%
\end{theorem}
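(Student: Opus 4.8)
The plan is to exploit the characterization of the optimal percentile vector given in Theorem~\ref{crll:opt_vec}, namely $(v_\mu)_j = F_\mu(y_j)$ where $\{y_j\}_{j\in[k]}$ is the support of a solution $\nu^{(k)}$ to the projection problem \eqref{eq:projection_real}, together with the behaviour of both the cumulative distribution function and the Wasserstein distance under the affine map $T(x) := \sigma x + m$. Let $\mu$ denote the law of $X$ and let $\mu'$ be the law of $X' = \sigma X + m$, so that $\mu' = T_\#\mu$. Since $\sigma>0$, the map $T$ is an increasing bijection of $\erre$, whence a direct computation yields the CDF identity $F_{\mu'}(x) = F_\mu\big((x-m)/\sigma\big)$, and $\mu'$ inherits the basic assumptions on $\mu$, so that $\vec v_{\mu'}$ is well defined.

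First I would establish that the projection of $\mu'$ onto $\PP_k(\erre)$ is merely the pushforward under $T$ of the projection of $\mu$. The key ingredient is the $1$-homogeneity of the Wasserstein distance under $T$: because $|T(x)-T(y)| = \sigma|x-y|$, pushing both marginals of any transport plan through $T$ shows that $W_1(T_\#\alpha,T_\#\beta) = \sigma\,W_1(\alpha,\beta)$ for all $\alpha,\beta\in\PP(\erre)$. Moreover, since $T$ is a bijection, the map $\lambda\mapsto T_\#\lambda$ is a bijection of $\PP_k(\erre)$ onto itself, sending a measure supported on $k$ points to the measure supported on the $k$ images of those points.

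Combining these two facts, for every $\lambda'\in\PP_k(\erre)$, writing $\lambda' = T_\#\lambda$ with $\lambda\in\PP_k(\erre)$, we obtain $W_1(\mu',\lambda') = \sigma\,W_1(\mu,\lambda)$; hence minimizing the left-hand side over $\PP_k(\erre)$ is equivalent to minimizing $W_1(\mu,\lambda)$ over $\PP_k(\erre)$. Therefore, if $\nu^{(k)} = \sum_j \nu_j\,\delta_{y_j}$ solves \eqref{eq:projection_real} for $\mu$, then $T_\#\nu^{(k)} = \sum_j \nu_j\,\delta_{\sigma y_j + m}$ solves the corresponding projection problem for $\mu'$, and its support is exactly $\{\,y_j' := \sigma y_j + m\,\}_{j\in[k]}$. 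Plugging this into the characterization of Theorem~\ref{crll:opt_vec} and using the CDF identity gives
\[
(v_{\mu'})_j = F_{\mu'}(y_j') = F_\mu\Big(\frac{(\sigma y_j + m) - m}{\sigma}\Big) = F_\mu(y_j) = (v_\mu)_j,
\]
so that $\vec v_{\mu'} = \vec v_\mu$, as required.

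The only subtlety I would flag is the possible non-uniqueness of $\nu^{(k)}$: the homogeneity argument shows that $T_\#$ induces a bijection between the set of minimizers of \eqref{eq:projection_real} for $\mu$ and that for $\mu'$, so every optimal percentile vector for $\mu$ is optimal for $\mu'$ and vice versa, and the invariance holds for any admissible choice of $\nu^{(k)}$. The main, and rather modest, obstacle is thus the clean verification of the homogeneity identity $W_1(T_\#\alpha,T_\#\beta) = \sigma\,W_1(\alpha,\beta)$ and of the CDF transformation rule; once these are in place, the rest is bookkeeping.
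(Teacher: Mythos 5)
Your proposal is correct and follows essentially the same route as the paper's proof: both rest on the scaling identity $W_1(T_\#\alpha,T_\#\beta)=\sigma\,W_1(\alpha,\beta)$, the fact that the pushforward under the affine map preserves $\PP_k(\erre)$ and hence maps minimizers of the projection problem to minimizers, and the CDF identity combined with the characterization $(v_\mu)_j=F_\mu(y_j)$ from Theorem~\ref{crll:opt_vec}. The only cosmetic difference is that you argue the optimality of the pushed-forward projection directly via the bijection $\lambda\mapsto T_\#\lambda$, whereas the paper phrases the same step as a proof by contradiction using the inverse map.
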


\begin{proof}
    Let $L(x)=\sigma x +m$ be a linear function such that $\sigma>0$.
    Since $\sigma>0$, $L$ is a bijective and monotone-increasing function.
    We denote the inverse function of $L$ with $H$.
    It is well-known that if $X$ is a random variable whose law is $\mu$, then the law of $L(X)$ is $L_\#\mu$.
    \footnote{We recall that $L_\#\mu$ is the pushforward of the measure $\mu$, defined as $L_\#\mu(A)=\mu(L^{-1}(A))$, see \cite{villani2009optimal}.}
    We now show that if $\nu^{(k)}$ is the projection of $\mu$ over $\PP_k(\erre)$, then $L_\#\nu^{(k)}$ is a projection of $L_\# \mu$ over $\PP_k(\erre)$.
    First, notice that, since $\nu^{(k)}\in\PP_k(\erre)$, then $L_\#\nu^{(k)}\in\PP_k(\erre)$.
    Toward a contradiction, let $\gamma$ be the projection of $L_\#\mu$ over $\PP_k(\erre)$.
    By definition of $\gamma$, we have
    \[
        W_1(\gamma,L_\#\mu)<W_1(L_\#\nu^{(k)},L_\#\mu).
    \]
    Let us now define $\eta=H_\#\gamma\in\PP_k(\erre)$, by definition we have $\gamma=L_\#\eta$.
    Furthermore, by the properties of the Wasserstein distance, we have
    \begin{equation}
    \label{eq:scalingWopt}
        W_1(L_\#\eta,L_\#\mu)=\sigma W_1(\eta,\mu)
    \end{equation}
    and 
    \[
        W_1(L_\#\nu^{(k)},L_\#\mu)=\sigma W_1(\nu^{(k)},\mu).
    \]
    In particular, we get
    \[
        W_1(\eta,\mu)<W_1(\nu^{(k)},\mu),
    \]
    which contradicts the optimality of $\nu^{(k)}$.
    We therefore conclude that $L_\#\nu^{(k)}$ is a projection of $L_\#\mu$ over $\PP_k(\erre)$.
    Notice that $y\in spt(\nu^{(k)})$ if and only if $\sigma y + m\in spt(L_\#\nu^{(k)})$.
    From Corollary \ref{crll:opt_vec}, we have that the $j$-th entry of the optimal vector related to $\mu$ is $F_\mu(y_j)$.
    Similarly, the $j$-th entry of the optimal vector related to $L_\#\mu$ is $F_{L_\#\mu}(\sigma y_j+m)=F_\mu(y_j)$, thus, by Theorem \ref{crll:opt_vec}, the optimal percentile vector with respect to $\sigma X + m$ is also optimal with respect to $X$ and vice-versa.
    Finally, we notice that the Bayesian approximation ratio of every $\PMp$ is the same for both $\mu$ and $L_\#\mu$.
    It follows from \eqref{eq:scalingWopt} and $W_1(L_\#\mu,L_\#\nu_{Q_{\vec v}})=\sigma W_1(\mu,\nu_{Q_{\vec v}})$.
    %

    %
\qed\end{proof}

Theorem \ref{thm:scale_invariance} formalizes the following observation: the optimal facility locations and the output of any percentile mechanism do not depend on the scale.
Indeed, given a percentile vector $\vec v$ and the number of agents $n$, if the agents' positions are sampled from a random variable $X$, the output of $\PMp$ is the vector containing the $(\floor{(n-1)v_j}+1)$-th order statistics of the sample.
Since the ordering of the values is unaffected by positive affine transformations, scaling any sample just magnifies (or shrinks) the cost of the output according to $\sigma$.
Similarly, if we scale the agents' positions, the optimal facility locations will scale accordingly.
Hence the ratio of the two costs is immune to scale changes.
%
%

\section{Computing the Optimal Percentile Mechanism from an approximation of \texorpdfstring{$\mu$}{mu}.}
\label{sec5}

In the previous section, we have shown that if the agents' types are sampled from a common probability distribution $\mu$, the mechanism designer is able to detect a percentile mechanism whose cost is asymptotically optimal when the number of agents increases.
To detect the optimal percentile vector, however, it is necessary to have access to the distribution $\mu$.
In many cases, this is not feasible since the designer has only access to an approximation or a prediction of agents' distribution, which we denote with $\Tilde{\mu}$.
Thus, the designer is able to compute $\vec v_{\tilde\mu}$ rather than the real optimal percentile vector $\vec v_\mu$.
We now show that it is possible to estimate the difference between the limit of the Bayesian approximation ratio of $\mathcal{PM}_{\vec v_{\tilde\mu}}$ and $1$, i.e. the limit of the Bayesian approximation ratio of the optimal percentile mechanism.
In particular, we show that the closer $\mu$ and $\Tilde{\mu}$ are with respect to the $W_\infty$ distance, the closer the asymptotic cost of $\mathcal{PM}_{\vec v_{\tilde \mu}}$ gets to the optimal cost.

\begin{theorem}
\label{thm:error}
Let $\Tilde{\mu}$ and $\mu$ be two probability measures supported over a compact interval $I$.
Let $\vec v_{\tilde\mu}$ be the percentile vector obtained by solving the system \eqref{eq:opt_per_vector_2} by using $\Tilde{\mu}$ instead of $\mu$.
Then, we have
\begin{equation}
    \lim_{n\to \infty}\Bigg|\frac{\EE[\SCo_{\vec v_{\tilde\mu}}(\vec X_n)]}{\EE[\SCo_{opt}(\vec X_n)]}-1\Bigg|\le \frac{W_\infty(\mu,\Tilde{\mu})+2W_1(\mu,\Tilde{\mu})}{W_1(\mu,\nu^{(k)})}\le 3\frac{W_\infty(\mu,\Tilde{\mu})}{W_1(\mu,\nu^{(k)})},
\end{equation}
where $\EE$ is the expected value with respect to the real agents' distribution $\mu$.
In particular, for every $n>k$, there exists a constant $C>0$ such that  
\[
|B_{ar}^{(n)}(\mathcal{PM}_{\vec v_{\mu}})-B_{ar}^{(n)}(\mathcal{PM}_{\vec v_{\tilde \mu}})|\le 3\frac{W_\infty(\mu,\Tilde{\mu})}{W_1(\mu,\nu^{(k)})} + \frac{C}{\sqrt{n}},
\]
where $\vec v_{\mu}$ is the optimal percentile vector associated with $\mu$. 
\end{theorem}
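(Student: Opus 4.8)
The plan is to pass through the deterministic limit provided by Theorem \ref{thm:limit} and then control the single nontrivial quantity $W_1(\mu,\nu_{Q_{\vec v_{\tilde\mu}}})$ by transporting the problem for $\mu$ back onto the problem for $\tilde\mu$ via the monotone map on the line. First I would note that $\vec v_{\tilde\mu}\in(0,1)^k$ by Theorem \ref{thm:system_charact}, so Theorem \ref{thm:limit} applies and gives
\[
\lim_{n\to\infty}\frac{\EE[\SCo_{\vec v_{\tilde\mu}}(\vec X_n)]}{\EE[\SCo_{opt}(\vec X_n)]}=\frac{W_1(\mu,\nu_{Q_{\vec v_{\tilde\mu}}})}{W_1(\mu,\nu^{(k)})}.
\]
Since $\nu^{(k)}$ minimises the projection cost over $\PP_k(\erre)$, the numerator is at least the denominator, so the limit in the statement equals $\big(W_1(\mu,\nu_{Q_{\vec v_{\tilde\mu}}})-W_1(\mu,\nu^{(k)})\big)/W_1(\mu,\nu^{(k)})$ with a nonnegative numerator. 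It therefore suffices to prove the numerator is bounded by $W_\infty(\mu,\tilde\mu)+2W_1(\mu,\tilde\mu)$; the cruder $3W_\infty(\mu,\tilde\mu)$ bound then follows from $W_1\le W_\infty$.

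The key object is the monotone transport map $T:=F_\mu^{[-1]}\circ F_{\tilde\mu}$, which pushes $\tilde\mu$ onto $\mu$. Let $\tilde\nu^{(k)}$ solve \eqref{eq:projection_real} for $\tilde\mu$, with support $\{\tilde y_j\}_{j\in[k]}$; by Theorem \ref{crll:opt_vec} we have $(v_{\tilde\mu})_j=F_{\tilde\mu}(\tilde y_j)$, and hence the support point $F_\mu^{[-1]}((v_{\tilde\mu})_j)$ of $\nu_{Q_{\vec v_{\tilde\mu}}}$ is exactly $T(\tilde y_j)$. By Lemma \ref{lmm:lemma2}, $\nu_{Q_{\vec v_{\tilde\mu}}}$ is the $W_1$-best measure supported on $\{T(\tilde y_j)\}_j$, so $W_1(\mu,\nu_{Q_{\vec v_{\tilde\mu}}})=\int\min_j|x-T(\tilde y_j)|\,d\mu$. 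I would then invoke the two standard facts about $T$ on the real line: it is the $W_1$-optimal map, so $\int|T(w)-w|\,d\tilde\mu(w)=W_1(\mu,\tilde\mu)$, while $|T(\tilde y_j)-\tilde y_j|=|F_\mu^{[-1]}((v_{\tilde\mu})_j)-F_{\tilde\mu}^{[-1]}((v_{\tilde\mu})_j)|\le W_\infty(\mu,\tilde\mu)$ for each fixed quantile level.

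The main estimate is then two triangle steps. Changing variables by $T$ and using $|T(w)-T(\tilde y_j)|\le|w-\tilde y_j|+|T(\tilde y_j)-\tilde y_j|+|T(w)-w|$ with the $W_\infty$-bound on the middle term gives
\[
W_1(\mu,\nu_{Q_{\vec v_{\tilde\mu}}})\le\int\min_j|w-\tilde y_j|\,d\tilde\mu+W_\infty(\mu,\tilde\mu)+\int|T(w)-w|\,d\tilde\mu=W_1(\tilde\mu,\tilde\nu^{(k)})+W_\infty(\mu,\tilde\mu)+W_1(\mu,\tilde\mu).
\]
Separately, the projection cost $\alpha\mapsto\min_{\lambda\in\PP_k(\erre)}W_1(\alpha,\lambda)$ is $1$-Lipschitz in $W_1$ (triangle inequality against a common competitor), so $W_1(\tilde\mu,\tilde\nu^{(k)})\le W_1(\mu,\nu^{(k)})+W_1(\mu,\tilde\mu)$. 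Combining the two displays and subtracting $W_1(\mu,\nu^{(k)})$ yields the numerator bound $W_\infty(\mu,\tilde\mu)+2W_1(\mu,\tilde\mu)$, proving the limit inequality. For the finite-$n$ claim I would observe that both ratios are well defined and $\ge1$ for $n>k$: Theorem \ref{thm:grs} gives $B_{ar}^{(n)}(\mathcal{PM}_{\vec v_{\mu}})\le 1+C_1/\sqrt n$, while Theorem \ref{thm:conv_rate} applied to $\vec v_{\tilde\mu}$ together with the limit just bounded gives $B_{ar}^{(n)}(\mathcal{PM}_{\vec v_{\tilde\mu}})\le 1+3W_\infty(\mu,\tilde\mu)/W_1(\mu,\nu^{(k)})+C_2/\sqrt n$; a short case split on which ratio is larger (each sits in an interval with left endpoint $1$) delivers the stated bound with $C=\max\{C_1,C_2\}$.

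The main obstacle is getting the main estimate asymmetric rather than crude: the reason the bound reads $W_\infty+2W_1$ and not $3W_\infty$ is that the same monotone map $T$ controls the bulk displacement $\int|T-\mathrm{id}|\,d\tilde\mu$ by the weaker $W_1$, whereas only the finitely many boundary terms $|T(\tilde y_j)-\tilde y_j|$ genuinely require the stronger $W_\infty$. Splitting the deviation this way—exploiting that $W_1$ and $W_\infty$ are, respectively, the $L^1$ and $L^\infty$ norms of $T-\mathrm{id}$—rather than bounding everything uniformly is the delicate point of the argument.
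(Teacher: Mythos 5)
Your proof is correct and follows essentially the same route as the paper: the auxiliary measure $\beta_{\vec v_{\tilde\mu}}=\sum_j(\tilde\nu^{(k)})_j\delta_{F_\mu^{[-1]}((v_{\tilde\mu})_j)}$ the paper compares against is precisely $T_\#\tilde\nu^{(k)}$ for your monotone map $T=F_\mu^{[-1]}\circ F_{\tilde\mu}$, and your pointwise triangle inequality integrated against $\tilde\mu$ is the paper's chain of $W_1$ triangle inequalities, yielding the identical decomposition $W_\infty(\mu,\tilde\mu)+2W_1(\mu,\tilde\mu)$ (one $W_\infty$ for relocating the $k$ atoms, two $W_1$ terms from the Lipschitz comparison of the projection costs). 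The finite-$n$ conclusion via Theorems \ref{thm:grs} and \ref{thm:conv_rate} also matches the paper's argument.
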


\begin{proof}
    Let us denote with $\Tilde{\nu}^{(k)}$ the projection of $\Tilde{\mu}$ over $\PP_k(\erre)$ and with $\nu^{(k)}$ the projection of $\mu$ over $\PP_k(\erre)$.
    We denote with $\{y_j\}_{j\in [k]}$ the support of $\nu^{(k)}$ and with $\{\Tilde{y}_j\}_{j\in [k]}$ the support of $\Tilde{\nu}^{(k)}$.
    Accordingly, we denote with $\vec v_{\Tilde{\mu}}$ and $\vec v_\mu$ the optimal percentile vectors associated with $\Tilde{\mu}$ and $\mu$, respectively.
    By Lemma \ref{lmm:lemma2}, we have that the numerator of the Bayesian approximation ratio converges to $W_1(\mu,\nu_{Q_{ \vec v_{\Tilde{\mu}}}})$, where $\nu_{Q_{ \vec v_{\Tilde{\mu}}}}$ is defined as in \eqref{eq:nu_qp}.
    Let us now consider, $\beta_{\vec v_{\Tilde{\mu}}}$ defined  as $\beta_{\vec v_{\Tilde{\mu}}}:=\sum_{j\in [k]}(\Tilde{\nu}^{(k)})_{j}\delta_{z_j}$,
    where $\vec z=(z_1,\dots,z_k)$ is the support of $\nu_{Q_{\vec v_{\Tilde{\mu}}}}$, then
    %
    \begin{align}
    \label{eq:chain_errors}
        \nonumber W_1(\mu,\nu_{Q_{ \vec v_{\Tilde{\mu}}}})&\le W_1(\mu,\beta_{\vec v_{\Tilde{\mu}}})\le W_1(\mu,\Tilde{\mu}) + W_1(\Tilde{\mu},\beta_{\vec v_{\Tilde{\mu}}})\\
        &\le W_1(\mu,\Tilde{\mu})+ W_1(\Tilde{\mu},\Tilde{\nu}^{(k)})+W_1(\Tilde{\nu}^{(k)},\beta_{\vec v_{\Tilde{\mu}}})\\
        \nonumber &\le   W_1(\mu,\Tilde{\mu})+ W_1(\Tilde{\mu},\nu^{(k)})+W_1(\Tilde{\nu}^{(k)},\beta_{\vec v_{\Tilde{\mu}}})\\
       \nonumber &\le 2W_1(\mu,\Tilde{\mu})+ W_1(\mu,\nu^{(k)})+W_1(\Tilde{\nu}^{(k)},\beta_{\vec v_{\Tilde{\mu}}}).
    \end{align}
    By definition of $\beta_{\vec v_{\Tilde{\mu}}}$ and $\Tilde{\nu}^{(k)}$, we have $W_1(\Tilde{\nu}^{(k)},\beta_{\vec v_{\Tilde{\mu}}})\le\sum_{j\in [k]}(\Tilde{\nu}^{(k)})_{j}|F^{[-1]}_{\Tilde{\mu}}(\Tilde{p}_j)-F^{[-1]}_{\mu}(\Tilde{p}_j)|$.
    Since $W_1(\mu,\tilde \mu)\le W_\infty(\mu,\tilde \mu)$ and $W_\infty(\mu,\Tilde{\mu})=\max_{\ell\in [0,1]}|F^{[-1]}_{\mu}(\ell)-F^{[-1]}_{\Tilde{\mu}}(\ell)|$, we infer $W_1(\Tilde{\nu}^{(k)},\beta_{\vec v_{\Tilde{\mu}}})\le W_\infty(\mu,\Tilde{\mu})$.
    To conclude, we notice that, 
    %
    \begin{align*}
        |B_{ar}^{(n)}&(\mathcal{PM}_{\vec v_{\mu}})-B_{ar}^{(n)}(\mathcal{PM}_{\vec v_{\tilde \mu}})|\\
        &\le |1-B_{ar}(\mathcal{PM}_{\vec v_{\tilde \mu}})|+|B_{ar}^{(n)}(\mathcal{PM}_{\vec v_{\mu}})-1|+|B_{ar}(\mathcal{PM}_{\vec v_{\tilde \mu}})-B_{ar}^{(n)}(\mathcal{PM}_{\vec v_{\tilde \mu}})|\\
        &\le 3\frac{W_\infty(\mu,\Tilde{\mu})}{W_1(\mu,\nu^{(k)})}+ |B_{ar}^{(n)}(\mathcal{PM}_{\vec v_{\mu}})-1|+ |B_{ar}(\mathcal{PM}_{\vec v_{\tilde \mu}})-B_{ar}^{(n)}(\mathcal{PM}_{\vec v_{\tilde \mu}})|,
    \end{align*}
    where $B_{ar}(\mathcal{PM}_{\vec v_{\tilde \mu}})=\lim_{n\to\infty}B_{ar}^{(n)}(\mathcal{PM}_{\vec v_{\tilde \mu}})$.
    Since $\mu$ has compact support and $\vec v_\mu,\vec v_{\tilde \mu}\in (0,1)^k$, we infer that $|B_{ar}^{(n)}(\mathcal{PM}_{\vec v_{\mu}})-1| \le O(n^{-\frac{1}{2}})$ and $|B_{ar}(\mathcal{PM}_{\vec v_{\tilde \mu}})-B_{ar}^{(n)}(\mathcal{PM}_{\vec v_{\tilde \mu}})|\le O(n^{-\frac{1}{2}})$.
    Thus there exist a constant $C>0$ such that
    \[
    |B_{ar}^{(n)}(\mathcal{PM}_{\vec v_{\mu}})-B_{ar}^{(n)}(\mathcal{PM}_{\vec v_{\tilde \mu}})|\le 3\frac{W_\infty(\mu,\Tilde{\mu})}{W_1(\mu,\nu^{(k)})} + \frac{C}{\sqrt{n}}
    \]
    which concludes the proof.
\qed\end{proof}

\section{Conclusion and Future Works}
\label{sec:conclusion}

In this paper, we studied the percentile mechanisms in the Bayesian Mechanism Design framework.
We have shown that the ratio between the expected cost of the mechanisms and the expected optimal cost converges to a constant as the number of agents goes to infinity.
We have characterized both the limit value and the convergence speed.
We then showed that for every underlying distribution $\mu$, there exists an optimal percentile vector $\vec{v}_\mu$ that does not depend on the mean or the variance of the distribution.
The scale invariance property allows us to compute the optimal percentile vector when the designer only knows the class to which the probability measure belongs.
Lastly, we have shown that determining the optimal percentile mechanism from an approximation of the underlying distribution leads to a mechanism whose performance is quasi-optimal as long as the approximation is close to the real distribution with respect to $W_\infty$.
A natural open question is whether our formalism could be adopted to extend our results to higher dimensional cases.
In \cite{sui2013analysis}, the percentile mechanisms are generalized to higher dimensions by dealing with each dimension separately.
This suggests that our approach can be extended to also handle higher-dimensional problems thanks to the fact that also the Wasserstein Distance can be separate along each cardinal direction \cite{auricchio2024pythagorean,auricchio2018computing}.
Moreover, our framework can be extended beyond the classic $k$-FLP.
In particular, it is foreseeable to use our results to tackle the case in which agents have fractional preferences.
Another interesting direction is to adapt our reformulation of the problem through OT theory to design and study randomized mechanisms for the $k$-FLP.

\begin{credits}
\subsubsection{\ackname} Jie Zhang was partially supported by a Leverhulme Trust Research Project Grant (2021 -- 2024) and the EPSRC grant (EP/W014912/1).

\end{credits}
%
%

\bibliographystyle{splncs04}
\bibliography{bibliography}

\begin{thebibliography}{10}
\providecommand{\url}[1]{\texttt{#1}}
\providecommand{\urlprefix}{URL }
\providecommand{\doi}[1]{https://doi.org/#1}

\bibitem{ahmadi2017survey}
Ahmadi-Javid, A., Seyedi, P., Syam, S.S.: A survey of healthcare facility location. Computers \& Operations Research  \textbf{79},  223--263 (2017)

\bibitem{10.2307/40800845}
Alon, N., Feldman, M., Procaccia, A.D., Tennenholtz, M.: Strategyproof approximation of the minimax on networks. Mathematics of Operations Research  \textbf{35}(3),  513--526 (2010)

\bibitem{ambrosio2005gradient}
Ambrosio, L., Gigli, N., Savar{\'e}, G.: Gradient flows: in metric spaces and in the space of probability measures. Springer Science \& Business Media, Berlin (2005)

\bibitem{auricchio2024pythagorean}
Auricchio, G.: On the pythagorean structure of the optimal transport for separable cost functions. Rendiconti Lincei  \textbf{34}(4),  745--771 (2024)

\bibitem{auricchio2018computing}
Auricchio, G., Bassetti, F., Gualandi, S., Veneroni, M.: Computing kantorovich-wasserstein distances on $ d $-dimensional histograms using $(d+ 1) $-partite graphs. In: Advances in Neural Information Processing Systems. vol.~31, pp. 5798--5808. MIT Press, Cambridge, Massachusetts, USA (2018)

\bibitem{auricchio2019computing}
Auricchio, G., Bassetti, F., Gualandi, S., Veneroni, M.: Computing wasserstein barycenters via linear programming. In: International Conference on Integration of Constraint Programming, Artificial Intelligence, and Operations Research. pp. 355--363. Springer, Berlin (2019)

\bibitem{auricchiocapacitated}
Auricchio, G., Clough, H.J., Zhang, J.: On the capacitated facility location problem with scarce resources. In: The 40th Conference on Uncertainty in Artificial Intelligence (2024)

\bibitem{auricchio2020equivalence}
Auricchio, G., Codegoni, A., Gualandi, S., Toscani, G., Veneroni, M.: The equivalence of fourier-based and wasserstein metrics on imaging problems. Rendiconti Lincei  \textbf{31}(3),  627--649 (2020)

\bibitem{auricchio2024facility}
Auricchio, G., Wang, Z., Zhang, J.: Facility location problems with capacity constraints: Two facilities and beyond. arXiv preprint arXiv:2404.13566  (2024)

\bibitem{auricchio2023extended}
Auricchio, G., Zhang, J., Zhang, M.: Extended ranking mechanisms for the m-capacitated facility location problem in bayesian mechanism design. In: Proceedings of the 23rd International Conference on Autonomous Agents and Multiagent Systems. p. 87–95. AAMAS '24 (2024)

\bibitem{10.1214/aoms/1177699450}
Bahadur, R.R.: {A Note on Quantiles in Large Samples}. The Annals of Mathematical Statistics  \textbf{37}(3),  577 -- 580 (1966)

\bibitem{doi:10.1080/13675560701561789}
Balcik, B., Beamon, B.M.: Facility location in humanitarian relief. International Journal of Logistics Research and Applications  \textbf{11}(2),  101--121 (2008)

\bibitem{barbera2001strategy}
Barbera, S., Dutta, B., Sen, A.: Strategy-proof social choice correspondences. Journal of Economic Theory  \textbf{101}(2),  374--394 (2001)

\bibitem{barda1990multicriteria}
Barda, O.H., Dupuis, J., Lencioni, P.: Multicriteria location of thermal power plants. European Journal of Operational Research  \textbf{45}(2-3),  332--346 (1990)

\bibitem{black1948rationale}
Black, D.: On the rationale of group decision-making. Journal of political economy  \textbf{56}(1),  23--34 (1948)

\bibitem{bobkov2019one}
Bobkov, S., Ledoux, M.: One-dimensional empirical measures, order statistics, and Kantorovich transport distances, vol.~261. American Mathematical Society, Providence, Rhode Island (2019)

\bibitem{chawla2007algorithmic}
Chawla, S., Hartline, J.D., Kleinberg, R.: Algorithmic pricing via virtual valuations. In: Proceedings of the 8th ACM Conference on Electronic Commerce. pp. 243--251. {ACM}, New York, NY, USA (2007)

\bibitem{chawla2014bayesian}
Chawla, S., Sivan, B.: Bayesian algorithmic mechanism design. ACM SIGecom Exchanges  \textbf{13}(1),  5--49 (2014)

\bibitem{daskalakis2013mechanism}
Daskalakis, C., Deckelbaum, A., Tzamos, C.: Mechanism design via optimal transport. In: Proceedings of the fourteenth ACM conference on Electronic commerce. pp. 269--286. {ACM}, New York, NY, USA (2013)

\bibitem{de1979bahadur}
De~Haan, L., Taconis-Haantjes, E.: On bahadur's representation of sample quantiles. Ann. Inst. Statist. Math  \textbf{31}(Part A),  299--308 (1979)

\bibitem{DBLP:conf/sigecom/DokowFMN12}
Dokow, E., Feldman, M., Meir, R., Nehama, I.: Mechanism design on discrete lines and cycles. In: {EC}. pp. 423--440. {ACM}, New York, NY, USA (2012)

\bibitem{escoffier2011strategy}
Escoffier, B., Gourves, L., Kim~Thang, N., Pascual, F., Spanjaard, O.: Strategy-proof mechanisms for facility location games with many facilities. In: International Conference on Algorithmic Decision Theory. pp. 67--81. Springer, Berling (2011)

\bibitem{DBLP:conf/atal/FilimonovM21}
Filimonov, A., Meir, R.: Strategyproof facility location mechanisms on discrete trees. In: {AAMAS}. pp. 510--518. {ACM}, New York, NY, USA (2021)

\bibitem{DBLP:journals/aamas/Filos-RatsikasL17}
Filos{-}Ratsikas, A., Li, M., Zhang, J., Zhang, Q.: Facility location with double-peaked preferences. Auton. Agents Multi Agent Syst.  \textbf{31}(6),  1209--1235 (2017)

\bibitem{fotakis2013strategyproof}
Fotakis, D., Tzamos, C.: Strategyproof facility location for concave cost functions. In: Proceedings of the fourteenth ACM conference on Electronic commerce. pp. 435--452. {ACM}, New York, NY, USA (2013)

\bibitem{fotakis2014power}
Fotakis, D., Tzamos, C.: On the power of deterministic mechanisms for facility location games. ACM Transactions on Economics and Computation pp. 1--37 (2014)

\bibitem{frogner2015learning}
Frogner, C., Zhang, C., Mobahi, H., Araya, M., Poggio, T.A.: Learning with a wasserstein loss. Advances in neural information processing systems  \textbf{28} (2015)

\bibitem{gairing2005selfish}
Gairing, M., Monien, B., Tiemann, K.: Selfish routing with incomplete information. In: Proceedings of the seventeenth annual ACM symposium on Parallelism in algorithms and architectures. pp. 203--212. {ACM}, New York, NY, USA (2005)

\bibitem{hartline2013bayesian}
Hartline, J.D.: Bayesian mechanism design. Foundations and Trends{\textregistered} in Theoretical Computer Science  \textbf{8}(3),  143--263 (2013)

\bibitem{hartline2009simple}
Hartline, J.D., Roughgarden, T.: Simple versus optimal mechanisms. In: Proceedings of the 10th ACM conference on Electronic commerce. {ACM} (2009)

\bibitem{kantorovich2006translocation}
Kantorovich, L.V.: On the translocation of masses. Journal of mathematical sciences  \textbf{133}(4),  1381--1382 (2006)

\bibitem{Levina2001}
Levina, E., Bickel, P.: The {E}arth {M}over's {D}istance is the {M}allows distance: Some insights from statistics. Proceedings of the IEEE International Conference on Computer Vision  \textbf{2},  251 -- 256 vol.2 (02 2001)

\bibitem{DBLP:conf/sigecom/LuSWZ10}
Lu, P., Sun, X., Wang, Y., Zhu, Z.A.: Asymptotically optimal strategy-proof mechanisms for two-facility games. In: {EC}. pp. 315--324. {ACM} (2010)

\bibitem{DBLP:conf/wine/LuWZ09}
Lu, P., Wang, Y., Zhou, Y.: Tighter bounds for facility games. In: {WINE}. Lecture Notes in Computer Science, vol.~5929, pp. 137--148. Springer, Berlin (2009)

\bibitem{lucier2010price}
Lucier, B., Borodin, A.: Price of anarchy for greedy auctions. In: Proceedings of the twenty-first annual ACM-SIAM symposium on Discrete Algorithms. pp. 537--553. SIAM, New York, NY, USA (2010)

\bibitem{10.1214/17-AOS1639}
Lugosi, G., Mendelson, S.: {Sub-Gaussian estimators of the mean of a random vector}. The Annals of Statistics  \textbf{47}(2),  783 -- 794 (2019)

\bibitem{DBLP:conf/sagt/Meir19}
Meir, R.: Strategyproof facility location for three agents on a circle. In: {SAGT}. Lecture Notes in Computer Science, vol. 11801, pp. 18--33. Springer, Berlin (2019)

\bibitem{MELO2009401}
Melo, M., Nickel, S., da~Gama, F.S.: Facility location and supply chain management – a review. European Journal of Operational Research  \textbf{196}(2),  401--412 (2009)

\bibitem{nisan1999algorithmic}
Nisan, N., Ronen, A.: Algorithmic mechanism design. In: Proceedings of the thirty-first annual ACM symposium on Theory of computing. pp. 129--140. {ACM}, New York, NY, USA (1999)

\bibitem{Pele2009}
Pele, O., Werman, M.: Fast and robust {E}arth {M}over's {D}istances. In: Computer vision, 2009 IEEE 12th international conference on. pp. 460--467. IEEE (2009)

\bibitem{procaccia2013approximate}
Procaccia, A.D., Tennenholtz, M.: Approximate mechanism design without money. ACM Transactions on Economics and Computation (TEAC)  \textbf{1}(4),  1--26 (2013)

\bibitem{Rubner2000}
Rubner, Y., Tomasi, C., Guibas, L.J.: The {E}arth {M}over's {D}istance as a metric for image retrieval. International Journal of Computer Vision  \textbf{40}(2),  99--121 (2000)

\bibitem{scagliotti23}
Scagliotti, A.: Deep learning approximation of diffeomorphisms via linear-control systems. Mathematical Control and Related Fields  \textbf{13}(3),  1226--1257 (2023)

\bibitem{scagliotti2023normalizing}
Scagliotti, A., Farinelli, S.: Normalizing flows as approximations of optimal transport maps via linear-control neural odes (2023)

\bibitem{sui2013analysis}
Sui, X., Boutilier, C., Sandholm, T.: Analysis and optimization of multi-dimensional percentile mechanisms. In: IJCAI. pp. 367--374. Citeseer, . (2013)

\bibitem{DBLP:conf/sigecom/TangYZ20}
Tang, P., Yu, D., Zhao, S.: Characterization of group-strategyproof mechanisms for facility location in strictly convex space. In: {EC}. pp. 133--157. {ACM} (2020)

\bibitem{villani2009optimal}
Villani, C.: Optimal transport: old and new, vol.~338. Springer, Berlin (2009)

\bibitem{zampetakis2023bayesian}
Zampetakis, E., Zhang, F.: Bayesian strategy-proof facility location via robust estimation. In: International Conference on Artificial Intelligence and Statistics. pp. 4196--4208. PMLR, . (2023)

\end{thebibliography}

\appendix



\clearpage

\section{Optimal Percentile Vectors of classic probability distributions.}
\label{app:C}
In this appendix, we study the optimal percentile vector with respect to the Social Cost.
In particular, we retrieve the formula of the optimal percentile mechanisms when the underlying distribution $\mu$ is Uniform, Exponential, or Gaussian, and we need to locate $1$, $2$, or $3$ facilities.
First, we focus on the cases $k=1$ and $k=2$.
We then deal with $k=3$ and show how to approach the general case $k>3$.

\paragraph{One Facility.}

When $k=1$, let $X$ be the random variable describing the agents' type, and $y$ be the optimal location of the facility. 
Let $y$ be the real value that minimizes $\EE[|X-y|]$.
It is well-known that the median mechanism is optimal with respect to the Social Cost function when we have a discrete input.
Similarly, the value $y$ that minimizes $\EE[|X-y|]$ is the median of $X$. 
Thus, as we are about to see, the median mechanism retains its optimality also in the Bayesian framework.

\begin{theorem}
\label{thm:one_fac}
    The median mechanism is optimal for the $1$-FLP.
\end{theorem}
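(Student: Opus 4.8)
The plan is to specialize the general machinery of Theorem~\ref{crll:opt_vec} to the case $k=1$. When $k=1$, the set $\PP_1(\erre)$ consists precisely of the Dirac masses $\{\delta_y : y\in\erre\}$, so the projection problem \eqref{eq:projection_real} collapses to
\[
\min_{y\in\erre} W_1(\mu,\delta_y)=\min_{y\in\erre}\int_{\erre}|x-y|\,d\mu=\min_{y\in\erre}\EE[|X-y|].
\]
Thus, by the reduction of the $1$-FLP to a Wasserstein projection established in Theorem~\ref{thm:SocialCost_projection}, identifying the support of the optimal measure amounts to the classical problem of minimizing the expected absolute deviation, whose solution is the median of $\mu$. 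The whole argument then reduces to computing this minimizer and feeding it into the characterization of the optimal percentile vector.

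First I would verify that the minimizer $y^\star$ of $y\mapsto\EE[|X-y|]$ is the median $y^\star=F_\mu^{[-1]}(1/2)$. Under the basic assumptions on $\mu$ (absolute continuity, with density strictly positive on the interior of the support), the objective is differentiable with derivative $2F_\mu(y)-1$, so the first-order condition $F_\mu(y^\star)=1/2$ pins down $y^\star$ uniquely as the median. Next, I would invoke Theorem~\ref{crll:opt_vec}: since the support of the solution to \eqref{eq:projection_real} is the single point $y^\star$, the optimal percentile vector is $\vec v_\mu=(F_\mu(y^\star))=(1/2)$. It then remains to observe that the percentile mechanism induced by $\vec v=(1/2)$ is exactly the median mechanism, since by definition it places the facility at the order statistic $x_{i_1}$ with $i_1=\floor{(n-1)/2}+1$, i.e.\ the sample median. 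Combining these facts, the median mechanism coincides with the optimal percentile mechanism $\mathcal{PM}_{\vec v_\mu}$, whose limit Bayesian approximation ratio equals $1$ by Theorem~\ref{crll:opt_vec}.

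The only genuine subtlety, and the main point to be careful about, is the first step: rigorously arguing that the median minimizes $\EE[|X-y|]$ and that this minimizer is unique. This is where the regularity hypotheses on $\mu$ are used, guaranteeing that $F_\mu$ is continuous and strictly increasing on the interior of the support so that the equation $F_\mu(y^\star)=1/2$ has a unique solution. Everything else is direct bookkeeping built on the already-established equivalence between the $k$-FLP and the Wasserstein projection problem and the closed-form characterization of optimal percentile vectors, so no new estimates are needed.
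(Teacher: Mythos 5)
Your proposal is correct and rests on the same key fact as the paper's own (one-line) proof, namely that the median minimizes $y\mapsto\EE[|X-y|]$; the additional steps you supply — identifying $\PP_1(\erre)$ with the Dirac masses, the first-order condition $2F_\mu(y)-1=0$, and feeding $y^\star$ into Theorem~\ref{crll:opt_vec} to get $\vec v_\mu=(1/2)$ — are just the bookkeeping the paper leaves implicit. No gap.
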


\begin{proof}
    It follows from the fact that the median is the value that minimizes the function $c\to \EE[|X-c|]$.
\end{proof}

\paragraph{Two Facilities.}

When $k=2$, the symmetry of the distribution is sufficient to retrieve the optimal percentile mechanism with respect to the Social Cost.
Moreover, as for the one facility case, the optimal percentile vector does not depend on the measure $\mu$.

\begin{theorem}
\label{thm:sym_utilitarian}
    If $\mu$ is symmetric, the optimal percentile mechanism with respect to the Social Cost is induced by the vector $\vec v=(0.25,0.75)$.
\end{theorem}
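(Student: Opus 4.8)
The plan is to combine the characterization of the optimal percentile vector in Theorem~\ref{crll:opt_vec} with the symmetry of $\mu$. By Theorem~\ref{thm:scale_invariance}, the optimal percentile vector is invariant under positive affine transformations, so I would first translate the center of symmetry of $\mu$ to the origin; after this reduction $\mu$ is symmetric about $0$, meaning $F_\mu(0)=\tfrac12$ and $F_\mu(-t)=1-F_\mu(t)$ for every $t$. By Theorem~\ref{crll:opt_vec} it then suffices to locate the support $\{y_1,y_2\}$ of a solution to the projection problem \eqref{eq:projection_real} and to verify that $F_\mu(y_1)=\tfrac14$ and $F_\mu(y_2)=\tfrac34$.

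Next I would specialize the characterizing system \eqref{eq:system_gen2_app} to $k=2$, which (the middle family being empty in this case) reduces to
\begin{equation*}
2F_\mu(y_1)=F_\mu\!\left(\tfrac{y_1+y_2}{2}\right),\qquad
2\Big(F_\mu(y_2)-F_\mu\!\left(\tfrac{y_1+y_2}{2}\right)\Big)=1-F_\mu\!\left(\tfrac{y_1+y_2}{2}\right).
\end{equation*}
I would then test the symmetric ansatz $y_2=-y_1=a>0$, for which the common midpoint is $\tfrac{y_1+y_2}{2}=0$ and hence $F_\mu\!\big(\tfrac{y_1+y_2}{2}\big)=\tfrac12$. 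The first equation becomes $2F_\mu(y_1)=\tfrac12$, giving $F_\mu(y_1)=\tfrac14$; the second becomes $2\big(F_\mu(y_2)-\tfrac12\big)=\tfrac12$, giving $F_\mu(y_2)=\tfrac34$. These two conditions are mutually consistent precisely because $F_\mu(a)=1-F_\mu(-a)$, so setting $a=F_\mu^{[-1]}(\tfrac34)$ produces an admissible symmetric solution with $(F_\mu(y_1),F_\mu(y_2))=(0.25,0.75)$.

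The delicate point is showing that this symmetric critical configuration is genuinely a minimizer of \eqref{eq:projection_real}, rather than merely one solution of the characterizing system; since the objective $\vec y\mapsto \int \min_{j}|x-y_j|\,d\mu$ is a minimum of convex functions and hence not convex, I cannot invoke convexity directly. The approach I would take is a symmetrization argument: because $W_1$ is invariant under the reflection $x\mapsto -x$ and $\mu$ is fixed by it, whenever $\lambda^\ast$ solves \eqref{eq:projection_real} so does its reflection, whose support is $\{-y_2^\ast,-y_1^\ast\}$. Using the regularity assumptions on $\mu$ (strict positivity of $\rho_\mu$ on the interior of the support), each cell of an optimal two-point configuration admits a unique median, which pins down the optimal configuration up to this reflection symmetry and therefore forces $y_1^\ast=-y_2^\ast$. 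Applying Theorem~\ref{crll:opt_vec} to this symmetric optimum then yields $\vec v_\mu=(F_\mu(y_1),F_\mu(y_2))=(0.25,0.75)$, which is exactly the asserted percentile vector. I expect ruling out a strictly better asymmetric optimum to be the main obstacle, and the step I would scrutinize most carefully when turning this sketch into a full proof.
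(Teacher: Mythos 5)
Your proposal follows essentially the same route as the paper: reduce by symmetry to a configuration $\{-a,a\}$ centered at the median, then use local optimality (each facility is the median of $\mu$ restricted to its half-line) to conclude that the facilities sit at the $0.25$ and $0.75$ quantiles. The paper argues this directly in three lines, while you route it through Theorem~\ref{crll:opt_vec} and the system \eqref{eq:system_gen2_app}; the content is the same. The one substantive difference is that you explicitly flag the step the paper silently asserts --- that the \emph{minimizer} of \eqref{eq:projection_real} is itself symmetric, not merely that a symmetric critical point exists. Your proposed fix does not quite close this: knowing that the reflection of an optimal configuration is again optimal only shows the \emph{set} of minimizers is reflection-invariant, and uniqueness of the median within each cell pins down the locations only \emph{given} the cells, which themselves depend on the locations; so without a separate uniqueness argument for the two-point quantizer you cannot conclude $y_1^\ast=-y_2^\ast$. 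Since the paper's own proof takes this symmetry for granted, your sketch is at least as rigorous as the original, but if you want a complete argument you would need either a uniqueness result for the projection onto $\PP_2(\erre)$ under the paper's regularity assumptions, or a direct comparison showing any asymmetric candidate can be strictly improved by symmetrization.
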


\begin{proof}
    Since $\mu$ is symmetric, we have that the optimal locations for two facilities are symmetric with respect to the median (or, equivalently, the mean) of $\mu$.
    In particular, the median of $\mu$ is equidistant from both facilities.
    It then suffices to retrieve the location of the facility on the right to the median to retrieve the other one.
    Since the facility locations are globally optimal, every facility location has to be locally optimal, thus the facility on the left of the median minimizes the functional $\EE[|X-c|]$, where the expected value is taken with respect to $\mu$ restricted on the set $\{X\le med(\mu)\}$.
    By symmetry, the other facility will be located at the median of $\mu$ restricted over $\{X\ge med(\mu)\}$, from which we conclude that the optimal percentile vector is $\vec p =(0.25,0.75)$.
\end{proof}

For asymmetric distributions, the computation of the optimal percentile mechanism is less straightforward and requires solving a system of equations.
In the following, we compute the optimal vector for the exponential distribution.

\begin{theorem}
    \label{two_facilities_exp}
    Let $\mu$ be an exponential probability distribution.
    Then, the optimal percentile vector with respect to the Social Cost for two facilities is $\vec v_{\mathcal{E}}=(0.33,0.83)$.
\end{theorem}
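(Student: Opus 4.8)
The plan is to invoke the scale invariance established in Theorem \ref{thm:scale_invariance} to reduce to the standard exponential distribution, and then to solve the characterizing system \eqref{eq:system_gen2_app} explicitly. Since the optimal percentile vector is invariant under positive affine transformations, I may assume without loss of generality that $\mu$ is the standard exponential, with density $\rho_\mu(x) = e^{-x}$ and cumulative distribution function $F_\mu(x) = 1 - e^{-x}$ on $[0,+\infty)$.

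For $k=2$, Theorem \ref{thm:system_charact} reduces to two equations in the two unknowns $y_1 \le y_2$, namely the first and last lines of \eqref{eq:system_gen2_app}. Writing $m := \frac{y_1+y_2}{2}$ for the midpoint and substituting $F_\mu(x) = 1 - e^{-x}$, the first equation $2F_\mu(y_1) = F_\mu(m)$ becomes $e^{-m} = 2e^{-y_1} - 1$, while the last equation $2\bigl(F_\mu(y_2) - F_\mu(m)\bigr) = 1 - F_\mu(m)$ simplifies to $e^{-m} = 2e^{-y_2}$.

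First I would exploit the second relation: taking logarithms gives $y_2 = m + \ln 2$, and combining with $y_1 = 2m - y_2$ yields $y_1 = m - \ln 2$. Substituting $y_1 = m - \ln 2$ into the first relation gives $e^{-m} = 2e^{-m}e^{\ln 2} - 1 = 4e^{-m} - 1$, hence $e^{-m} = \tfrac{1}{3}$ and $m = \ln 3$. This produces $y_1 = \ln\tfrac{3}{2}$ and $y_2 = \ln 6$, both of which are positive and correctly ordered inside the support. Reading off the percentile values through \eqref{eq:opt_per_vector_2}, I then obtain $(v_{\mathcal{E}})_1 = F_\mu(y_1) = 1 - \tfrac{2}{3} = \tfrac{1}{3} \approx 0.33$ and $(v_{\mathcal{E}})_2 = F_\mu(y_2) = 1 - \tfrac{1}{6} = \tfrac{5}{6} \approx 0.83$, which is the claimed vector $\vec v_{\mathcal{E}}$.

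The computation itself is routine once the system is set up; the only point requiring care is to confirm that the critical point found is the genuine optimizer rather than a spurious root. This is guaranteed by Theorem \ref{thm:system_charact}, which asserts that the support $\{y_1,y_2\}$ of the Wasserstein projection $\nu^{(k)}$ satisfies \eqref{eq:system_gen2_app}, together with the existence of that projection. Since the exponential distribution meets the basic assumptions (absolute continuity, interval support, strictly positive differentiable density), the derived solution with $y_1 < y_2$ is the one associated with the true projection, so no separate uniqueness or second-order argument is needed beyond checking feasibility of the root.
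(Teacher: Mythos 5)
Your proposal is correct and follows essentially the same route as the paper: both set up the local-optimality (restricted-median) conditions at the midpoint $z=\tfrac{1}{2}(y_1+y_2)$, solve the resulting system for the standard exponential to get $y_1=\log\tfrac{3}{2}$ and $y_2=\log 6$, and read off $\vec v_{\mathcal{E}}=(\tfrac{1}{3},\tfrac{5}{6})$ via \eqref{eq:opt_per_vector_2}. The only cosmetic difference is that you cite Theorem \ref{thm:system_charact} and the scale invariance of Theorem \ref{thm:scale_invariance} up front (the latter being a slightly more careful justification for fixing the rate parameter), whereas the paper re-derives the median conditions by a direct contradiction argument inside the proof.
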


\begin{proof}
Let $y_1$ and $y_2$ be the points in the support of the projection of $\mu$ over $\PP_2(\erre)$.
Let us denote with $z$ the middle point of $y_1$ and $y_2$, that is $z=\frac{1}{2}(y_1+y_2)$.
We then have that agents on the left of $z$ will be using the facility at $y_1$, while the ones on the right, will use the one in $y_2$.
In particular, the expected Social Cost is
\[
\int_{0}^{z}|x-y_1|e^{-x}dx+\int_{z}^{\infty}|x-y_2|e^{-x}dx.
\]
We now show that $y_1$ is the median of $\mu$ restricted to $\{x\le z\}$.
Assume, toward a contradiction, that $y_1$ is not the median of $\mu$ restricted over $\{x\le z\}$.
There exists a value $y_1'$ such that
\[
\int_{0}^{z}|x-y_1|e^{-x}dx >\int_{0}^{z}|x-y_1'|e^{-x}dx.
\]
Hence we get
\begin{align*}
    \int_{0}^{z}|x-y_1|e^{-x}dx+\int_{z}^{\infty}|x-y_2|e^{-x}dx&>\int_{0}^{z}|x-y_1'|e^{-x}dx+\int_{z}^{\infty}|x-y_2|e^{-x}dx\\
    &\ge \min_{\lambda_1,\lambda_2}W_1(\mu,\lambda),
\end{align*}
where $\lambda=\lambda_1\delta_{y_1'}+\lambda_2\delta_{y_2}$, which contradicts the optimality of the facility locations.
We therefore conclude that $y_1$ is the median of $\mu$ restricted over $(-\infty,z)$, i.e.
\[
2F_\mu(y_1)=F_\mu(z).
\]
Similarly, we get that
\[
2(1-F_\mu(y_2))=(1-F_\mu(z)).
\]
Since $F_\mu(t)=1-e^{-t}$, we get the following system of equations
\begin{equation}
\begin{cases}
2(1-e^{-y_1})=(1-e^{-z}) \\
2e^{-y_2}=e^{-z} \\ 
z=\frac{1}{2}(y_1+y_2)
\end{cases}.
\end{equation}
From the second and third equations, we retrieve $y_2=y_1+2\log(2)$, which, plugged in the first equation, allows us to conclude that $y_1=log(3)-\log(2)$ and $y_2=\log(6)$.
From equation \eqref{eq:opt_per_vector_2}, we conclude the proof.
\qed
\end{proof}

\paragraph{Three  and More Facilities.}

Lastly, we consider the case $k\ge 3$.
We show that the optimal percentile vector can be retrieved by solving a system of equations obtained by expressing the local optimality properties of the solution to problem \eqref{eq:projection_real}.
Let us now consider a probability distribution $\mu$ and its c.d.f., namely $F_\mu$.
Given $k$, let us denote with $y_1,\dots,y_k$ the support of the solution to
\[
\min_{\lambda\in\PP_k(\erre)}W_1(\mu,\lambda).
\]
Let us now define $z_i=\frac{y_i+y_{i+1}}{2}$ for $i=1,\dots,k-1$, $z_0=-\infty$, and $z_k=+\infty$.
Let us now focus on $y_1$.
Since every agent's cost is defined by its distance to the closest facility, we know that every agent in $(z_0,z_1)$ will access the facility located in $y_1$.
Due to the optimality of the solution, we infer that $y_1$ is locally optimal over the set $(z_0,z_1)$.
Otherwise, we could reduce the cost of the solution by replacing $y_1$ with the optimal facility location for the problem restricted to $(z_0,z_1)$.
Since we are considering the Social Cost, the local optimality of $y_1$ is expressed by the identity $2(F_\mu(y_1)-F_\mu(z_0))=F(z_1)-F_\mu(z_0)$, since $y_1$ has to be the median of $\mu$ when the measure is restricted to $(z_0,z_1)$.
By expressing the local optimality condition for every facility location and replacing every $z_i$ with its definition, we get the following system of $k$ equations
\begin{equation}
\label{eq:system_gen2}
    \begin{cases}
        2F_\mu(y_1)=F_\mu\Big(\frac{y_1+y_2}{2}\Big)\\
        2\bigg(F_\mu(y_2)-F_\mu\Big(\frac{y_1+y_2}{2}\Big)\bigg)=F_\mu\Big(\frac{y_2+y_3}{2}\Big)-F_\mu\Big(\frac{y_1+y_2}{2}\Big)\\
        \quad\quad\quad\dots\\
        2\bigg(F_\mu(y_{k-1})-F_\mu\Big(\frac{y_{k-2}+y_{k-1}}{2}\Big)\bigg)=F_\mu\Big(\frac{y_{k-1}+y_{k-2}}{2}\Big)-F_\mu\Big(\frac{y_{k-1}+y_{k-2}}{2}\Big)\\
        2\bigg(F_\mu(y_k)-F_\mu\Big(\frac{y_k+y_{k-1}}{2}\Big)\bigg)=1-F_\mu\Big(\frac{y_k+y_{k-1}}{2}\Big)\\
    \end{cases}.
\end{equation}
Notice that, since the projection problem \eqref{eq:projection_real} admits a solution, the system \eqref{eq:system_gen2} always admits at least a solution. 
To conclude this section, we leverage these characterizations to explicitly compute the optimal percentile vector for the $3$-FLP with respect to some classic probability distributions.
As for the previous cases, we distinguish between symmetric and asymmetric distributions, since by exploiting the symmetry of the measure it is possible to simplify the problem.

\begin{theorem}
\label{thm:fac_3_sym}
    Let $\mu$ be a symmetric distribution whose mean is equal to $0$.
    Then, the second entry of the optimal percentile mechanism is $0.5$.
    The other two entries are $F_\mu(\pm 2z)$, where $z$ minimizes the following function $z \to \int_{0}^z x^1 d\mu+\int_{z}^{+\infty}|x-2z| d\mu$.
    In particular, the optimal percentile vectors for the Uniform and Gaussian distribution are $\vec v_{\mathcal{U}}\sim(0.16,0.5,0.83)$ and $\vec v_\mathcal{N}\sim(0.15,0.5,0.85)$, respectively.
\end{theorem}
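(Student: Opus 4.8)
The plan is to exploit the symmetry of $\mu$ to reduce the three-facility projection problem to a one-dimensional optimization over a single scalar parameter, and then to invoke Theorem~\ref{crll:opt_vec} to translate the optimal facility locations into the optimal percentile vector. Throughout I write $g(z):=\int_{0}^{z}x\,d\mu+\int_{z}^{+\infty}|x-2z|\,d\mu$ for the function appearing in the statement.

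First I would recall that, by Theorem~\ref{crll:opt_vec}, the optimal percentile vector is $(F_\mu(y_1),F_\mu(y_2),F_\mu(y_3))$, where $y_1\le y_2\le y_3$ is the support of a solution to the projection problem \eqref{eq:projection_real}. The key reduction is to argue that this support can be taken symmetric about the origin. Since $\mu$ has mean $0$ and is symmetric, the reflection $R(x)=-x$ satisfies $R_\#\mu=\mu$; as $R$ is an isometry of $\erre$, for any $\nu\in\PP_3(\erre)$ one has $W_1(\mu,R_\#\nu)=W_1(R_\#\mu,R_\#\nu)=W_1(\mu,\nu)$, so the reflected configuration is optimal whenever the original is. I would then conclude that an optimal configuration of the form $\{-a,0,a\}$ exists, which forces the middle facility to the centre of symmetry $y_2=0$ and hence $F_\mu(y_2)=\tfrac12$, yielding the claimed middle entry. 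Writing $a=2z$, the cell boundaries become $\pm z$ and a direct computation using the symmetry of $\mu$ gives
\[
W_1(\mu,\nu)=\int_{-z}^{z}|x|\,d\mu+2\int_{z}^{+\infty}|x-2z|\,d\mu=2\left(\int_{0}^{z}x\,d\mu+\int_{z}^{+\infty}|x-2z|\,d\mu\right)=2g(z),
\]
so minimizing the Social Cost over symmetric configurations is equivalent to minimizing $g$. The outer facilities then sit at $\pm 2z$, producing the remaining entries $F_\mu(\pm 2z)$.

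To make the characterization sharp I would differentiate $g$. The boundary term $+z\rho_\mu(z)$ coming from the first integral cancels the term $-z\rho_\mu(z)$ produced by the moving lower limit of the second integral, while the dependence of the integrand on $2z$ contributes $2\big(2F_\mu(2z)-F_\mu(z)-1\big)$; altogether $g'(z)=4F_\mu(2z)-2F_\mu(z)-2$. Setting $g'(z)=0$ recovers exactly the median condition $2\big(F_\mu(2z)-F_\mu(z)\big)=1-F_\mu(z)$, which is the third equation of system~\eqref{eq:system_gen2_app}; this confirms that a minimizer of $g$ is a genuine optimal location rather than an artifact of the symmetric restriction. Finally I would carry out the explicit computations: for the uniform law, normalized to $[-1,1]$ using scale invariance (Theorem~\ref{thm:scale_invariance}), the stationarity condition collapses to the linear equation $3z-1=0$, giving $z=\tfrac13$, $2z=\tfrac23$, and percentiles $(\tfrac16,\tfrac12,\tfrac56)\approx(0.16,0.5,0.83)$; for the standard Gaussian it becomes $2\Phi(2z)-\Phi(z)=1$, whose numerical solution yields $\Phi(2z)\approx 0.85$ and hence $(0.15,0.5,0.85)$.

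The step I expect to be the main obstacle is justifying that an optimal configuration may be taken symmetric. The reflection argument shows that the reflected configuration is optimal, but passing from this to the existence of a symmetric optimum requires a little care: it is immediate under uniqueness of the projection, which is plausible under the strict positivity and differentiability of $\rho_\mu$ assumed in Section~\ref{sec:preliminaries} (these make each cell-wise median unique), but uniqueness of the full configuration is not entirely free. I would therefore close the gap structurally rather than through uniqueness, by noting that the middle facility of \emph{any} optimal configuration is the median of $\mu$ on its own cell and combining this with the reflection symmetry to pin $y_2=0$; once the centre is fixed, the two remaining locations are determined by the scalar minimization of $g$, and the value $2g(z^\ast)$ at its minimizer matches the optimal Social Cost, ruling out a strictly better asymmetric competitor.
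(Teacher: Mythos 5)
Your proposal is correct and follows essentially the same route as the paper's proof: reduce by symmetry to a configuration $\{-2z,0,2z\}$, note that the Social Cost equals $2g(z)$ with $g(z)=\int_0^z x\,d\mu+\int_z^{+\infty}|x-2z|\,d\mu$, and solve the stationarity condition $2F_\mu(2z)-F_\mu(z)=1$ for each distribution. You are in fact more careful than the paper on the one delicate step (the paper merely asserts that the optimal configuration is symmetric, while you give the reflection argument and honestly flag the residual uniqueness issue), and your cross-check that $g'(z)=4F_\mu(2z)-2F_\mu(z)-2=0$ reproduces the last equation of system \eqref{eq:system_gen2_app}, together with the explicit uniform computation the paper omits as ``similar,'' fills in genuine gaps; note only that the paper's intermediate value $z=0.62$ for the Gaussian is inconsistent with its own final percentile $0.85$, whereas your $z\approx 0.52$ is the value actually consistent with $\Phi(2z)\approx 0.85$.
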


\begin{proof}
    By a symmetry argument, we have that one of the three facilities has to be located at the mean of the measure.
    Similarly, the other two facilities must be symmetrically located with respect to the mean.
    Let us now denote with $2y$ the position of the facility on the right of $0$.
    Therefore $y$ is the minimum of the function 
    \[
    z\to \int_{0}^{z}xd\mu+\int_{z}^{\infty}|x-2z|d\mu,
    \]
    which concludes the first half of the proof.
    Let us now consider the Gaussian distribution, since the other cases are similar.
    In this case, up to a constant, we need to minimize the following functional
    \begin{align*}
         z \to &\int_{0}^{z}xe^{-\frac{x^2}{2}}dx+\int_{z}^{\infty}|x-2z|e^{-\frac{x^2}{2}} dx\\
         =&\int_{0}^{z}xe^{-\frac{x^2}{2}}dx+\int_{z}^{2z}(2z-x)e^{-\frac{x^2}{2}} dx +\int_{2z}^{\infty}(x-2z)e^{-\frac{x^2}{2}} dx
    \end{align*}
    with respect to $z$.
    By computing the integrals in the latter equation, the function to optimize boils down to
    \[
    z\to 1-2e^{-\frac{z^{2}}{2}}+2e^{-2z^{2}}+2z(2\Phi(2z)-\Phi(z))-2z,
    \]
    where $\Phi(z)$ is the cumulative distribution function of the Gaussian distribution.
    By differentiating the last equation and setting the derivative equal to zero, we retrieve the following equation
    \[
    (2\Phi(2z)-\Phi(z))-1=0.
    \]
    Solving the equation leads us to $z=0.62$, which allows us to retrieve the optimal percentile vector.
\end{proof}

Finally, when the underlying distribution is not symmetric, the optimal percentile vector can still be found by solving the system described in \eqref{eq:system_gen2}.
For the sake of completeness, we retrieve the optimal percentile vector for the Exponential distribution.

\begin{theorem}
\label{thm:exp_3_fac}
When $k=3$, the optimal percentile vector for the Exponential distribution with respect to the Social Cost is $\vec v_{\mathcal{E}}=(0.25,0.67,0.92)$. 
\end{theorem}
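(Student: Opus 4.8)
The plan is to apply Theorem \ref{thm:system_charact}, which reduces the problem to solving the system \eqref{eq:system_gen2} of $k=3$ equations for the support points $y_1\le y_2\le y_3$ of the projection of $\mu$ onto $\PP_3(\erre)$, and then reading off the optimal vector through \eqref{eq:opt_per_vector_2} as $\vec v_{\mathcal E}=(F_\mu(y_1),F_\mu(y_2),F_\mu(y_3))$. By the scale invariance of Theorem \ref{thm:scale_invariance} I may take the rate to be $1$, so that $F_\mu(t)=1-e^{-t}$ and $\rho_\mu(t)=e^{-t}$ on $[0,\infty)$. Substituting this c.d.f. into \eqref{eq:system_gen2} turns each of the three equations into the statement that $y_j$ is the median of $\mu$ restricted to the Voronoi cell $(z_{j-1},z_j)$, where $z_1=\frac{y_1+y_2}{2}$ and $z_2=\frac{y_2+y_3}{2}$.

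First I would exploit the memorylessness of the exponential to resolve the rightmost equation directly. Since $\mu$ restricted to $(z_2,\infty)$ is again an exponential shifted by $z_2$, its median is $z_2+\log2$, so the condition $2\bigl(F_\mu(y_3)-F_\mu(z_2)\bigr)=1-F_\mu(z_2)$ forces $y_3=z_2+\log2=\frac{y_2+y_3}{2}+\log2$, i.e. $y_3-y_2=2\log2$, exactly as in the two-facility computation of Theorem \ref{two_facilities_exp}. For the two remaining cells I would use the median identity $e^{-y_j}=\tfrac12\bigl(e^{-z_{j-1}}+e^{-z_j}\bigr)$, noting that the leftmost cell is $(0,z_1)$ so that $e^{-z_0}=1$. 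Setting $a:=e^{-y_1}$ and $b:=e^{-y_2}$, the coupling through the midpoints collapses because $e^{-z_1}=\sqrt{ab}$ and, from $y_3-y_2=2\log2$, $e^{-z_2}=\tfrac12 b$.

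With these substitutions the system reduces to the two scalar relations $\sqrt{ab}=2a-1$ (leftmost cell) and $\sqrt{ab}=\tfrac32 b$ (middle cell). The second gives $a=\tfrac94 b$, and plugging this into the first yields $\tfrac23 a=2a-1$, hence $a=\tfrac34$, $b=\tfrac13$, and then $e^{-y_3}=\tfrac14 e^{-y_2}=\tfrac1{12}$. Reading off the percentiles through \eqref{eq:opt_per_vector_2}, I get $F_\mu(y_1)=1-a=\tfrac14$, $F_\mu(y_2)=1-b=\tfrac23$, and $F_\mu(y_3)=1-\tfrac1{12}=\tfrac{11}{12}$, which rounds to $\vec v_{\mathcal E}=(0.25,0.67,0.92)$.

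The only genuine difficulty is the nonlinearity of the three equations and their coupling through the midpoints $z_i$; the main obstacle I expect is bookkeeping rather than depth. Memorylessness is the key that makes it tractable, since it converts the products $e^{-(y_i+y_{i+1})/2}$ into simple geometric means of the unknowns and decouples the rightmost cell, turning the system into a short algebraic cascade solvable in closed form. As a sanity check I would verify that the same mechanism reproduces the $k=1$ and $k=2$ exponential answers ($0.5$ and $(0.33,0.83)$ of Theorem \ref{two_facilities_exp}) and that the resulting $y_1<y_2<y_3$ are strictly increasing, confirming that the candidate genuinely solves the projection problem \eqref{eq:projection_real} and not merely its first-order system.
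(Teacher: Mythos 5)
Your proposal is correct and follows essentially the same route as the paper: both reduce the problem to the local-median system \eqref{eq:system_gen2} for the Voronoi cells $(z_{j-1},z_j)$ and solve it for the exponential c.d.f., arriving at $y_1=\log(4/3)$, $y_2=\log 3$, $y_3=\log 12$ and hence $(1/4,\,2/3,\,11/12)$. Your substitution $a=e^{-y_1}$, $b=e^{-y_2}$ together with memorylessness is just a cleaner way of carrying out the algebra the paper leaves implicit.
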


\begin{proof}
    Let $y_1$, $y_2$, and $y_3$ denote the support of the projection of $\mu$ over $\PP_3(\erre)$ with respect to the Social Cost.
    Let us denote with $z_1$ and $z_2$ the middle point between $y_1$ and $y_2$, and the middle point of $y_2$ and $y_3$, respectively.
    Following the same argument used in the proof of Theorem \ref{two_facilities_exp}, we have that $y_1$ is the median of $\mu$ restricted to $[0,z_1]$.
    Similarly, we have that $y_2$ is the median of $\mu$ restricted to $[z_1,z_2]$ and $y_3$ is the median of $\mu$ restricted on $[z_3,\infty)$.
    These three conditions allows us to write the following system
    \[
    \begin{cases}
        2(1-e^{-y_1})=1-e^{-z_1}\\
        2((1-e^{-y_1})-(1-e^{-z_1}))=(1-e^{-z_2})-(1-e^{-z_2})\\
        e^{-z_2}=2e^{-y_2}\\
        z_1=\frac{1}{2}(y_1+y_2)\\
        z_2=\frac{1}{2}(y_2+y_3)
    \end{cases}
    \]
    whose solutions are $y_1=\log(\frac{4}{3})$, $y_2=\log(3)$, and $y_3=\log(12)$, which concludes the proof.
\qed
\end{proof}

\end{document}